\newtheorem{theorem}{\textbf{Theorem}}
\newtheorem{lemma}{\textbf{Lemma}}
\newtheorem{example}{\textbf{Example}}
\newtheorem{corollary}{\textbf{Corollary}}
\newtheorem{remark}{\textbf{Remark}}
\newtheorem{definition}{\textbf{Definition}}
\newtheorem{problem}{\textbf{Problem}}
\newtheorem{proposition}{\textbf{Proposition}}
\newenvironment{proof}{{\noindent{\bf \emph{Proof:}}}\quad}{\hfill $\square$\par}
\begin{document}
%
%
%
%

\title{\LARGE \bf
Structural Controllability of Undirected Diffusive Networks with Vector-Weighted Edges
}

%
%



\author{Yuan Zhang, Yuanqing Xia, Han Gao, and Guangchen, Zhang
\thanks{The authors are with the School of Automation, Beijing Institute of Technology, Beijing, China. 
       {(email: {\tt\small zhangyuan14@bit.edu.cn, xia\_yuanqing@bit.edu.cn, gaohhit@163.com, guangchen\_123@126.com}).}}}
\maketitle

\begin{abstract} In this paper, controllability of undirected networked systems with {diffusively coupled subsystems} is considered, where each subsystem is of {identically {\emph{fixed}}} general high-order single-input-multi-output dynamics. The underlying graph of the network topology is {\emph{vector-weighted}}, rather than scalar-weighted. The aim is to find conditions under which the networked system is structurally controllable, i.e., for almost all vector values for interaction links of the network topology, the corresponding system is controllable. It is proven that, the networked system is structurally controllable, if and only if each subsystem is controllable and observable, and the network topology is globally input-reachable. These conditions are further extended to the cases {with multi-input-multi-output subsystems and matrix-weighted edges,} or where both directed and undirected interaction links exist.

\end{abstract}
\begin{IEEEkeywords}
Undirected diffusive network, structural controllability, network analysis and control, vector-weighted Laplacian 
\end{IEEEkeywords}


\section{INTRODUCTION}
Analysis and synthesize of networked systems with {diffusively coupled} subsystems, also known as {\emph{diffusive networks}} in some literature \cite{LucaSynchronization,zhang2014upper}, have received much attention in the fields of synchronization, consensus, stability, as well as controllability and observability \cite{LucaSynchronization,mesbahi2010graph,zhang2014upper}. This is because the {diffusive coupling} mechanism frequently arises naturally in thermal systems, power systems, car-following traffic systems, as well as opinion propagations in social networks (see examples in \cite{mesbahi2010graph,Modern_Control_Ogata}). As is known to all, controllability is a fundamental system property. Particularly, controllability of a leader-follower multi-agent system (MAS) running the consensus protocol guarantees that the system can reach agreement subspace arbitrarily fast \cite{R.Am2009Controllability}. 

Many works have focused on controllability of leader-follower MASs \cite{H.G.2004On,R.Am2009Controllability,aguilar2015graph,zhang2014upper}.  For example, \cite{H.G.2004On} gave necessary and sufficient conditions for controllability of such networked systems in terms of eigenvectors of Laplacian matrices. The works \cite{R.Am2009Controllability,zhang2014upper,aguilar2015graph} studied the same problem by means of graph-theoretic tools.  However, except \cite{zhang2014upper}, all the above works assume that each subsystem is a single-integrator. 

On the other hand, controllability of networked systems with high-order subsystems has also attracted much research interests in \cite{L.Wa2016Controllability,Y_Zhang_2016,trumpf2018controllability,xue2018modal,zhang2019structural,Commault2019Generic}. To be specific, \cite{L.Wa2016Controllability,xue2018modal,Commault2019Generic} focused on networked systems with identical subsystems (called homogeneous networks), whiles \cite{Y_Zhang_2016,trumpf2018controllability,zhang2019structural} on networked systems with general heterogeneous subsystems (called heterogeneous networks). {Particularly, controllability as a generic property for a networked system is studied in \cite{zhang2019structural,Commault2019Generic}.}  

However, except \cite{Y_Zhang_2016,trumpf2018controllability,zhang2019structural} which focus on heterogeneous networks, almost all results on controllability of homogeneous networks are built on the condition that all weights of edges in the network topology belong to $\{0,1\}$ or some scalars \cite{H.G.2004On,R.Am2009Controllability,aguilar2015graph,zhang2014upper,L.Wa2016Controllability,xue2018modal,Commault2019Generic}. Notice that, when each subsystem is not of single-input-single-output (SISO), there is a typical situation that different interaction channels between two subsystems are weighted differently, either because of differences in the nature of physical variables they convey, or the variants of the channels themselves. For example, in some networks consisting of both physical coupling and cyber coupling, the physical channels and the cyber ones between two subsystems can have different weights. See more examples in \cite{tuna2016synchronization,tuna2017observability,zhang2019Structural_arxiv}. If we use a graph to denote the subsystem interaction topology (i.e., network topology), then each edge of the graph { may have a {\emph{vector-valued}} or {\emph{matrix-valued}} weight as introduced in \cite{tuna2016synchronization,tuna2017observability}. In such case, some existing approaches for controllability analysis for homogeneous networks with scalar-weighted edges may not be applicable (such as the spectrum-based approaches in \cite{zhang2014upper,xue2018modal}).}  

In this paper, we study structural controllability of an undirected diffusive networked system with high-order subsystems, where the underlying graph of the network topology has symmetric vector-weighted edges.  Our purpose is to find necessary and sufficient conditions under which the networked system is structurally controllable, i.e., for almost all vector values for edges of the network topology, the whole system is controllable.  The main contributions of this paper are three-fold.  First, we prove that, an undirected diffusive networked system with identical {{\emph{single-input-multi-output (SIMO)}} subsystems (leading to vector-weighted edges in the network topology) is structurally controllable, if and only if each subsystem is controllable and observable, and the network topology is globally input-reachable. Second, we show that our conditions are still valid even when both directed edges and undirected ones exist. {Third, we extend our results to the case with matrix-weighted edges, where the weight matrices can be of arbitrary dimensions.}


{It is remarkable that some relations between connectivity and observability have been revealed in \cite{tuna2017observability} for a networked system, in which subsystems are decoupled whereas their outputs are coupled by sensor networks, and each interconnection edge defined therein has a {\emph{semi-definite weight}}. Such setting is obviously different from the class of systems studied in this paper. It is also mentionable that (strong) structural controllability of networks of single-integrators with symmetric weights (or more complicated dependencies than symmetry) has recently received much attention in \cite{mousavi2017structural,menara2018structural,li2018structural,jia2019strong}. This paper differs from these works in the sense that dependencies exist between self-loop of a vertex and the edges connecting to it.
Finally, although heterogeneous networks described in \cite{Y_Zhang_2016,trumpf2018controllability,zhang2019structural} may cover the system model studied in this paper, their results are essentially rank conditions \cite{Y_Zhang_2016,trumpf2018controllability} whose verifications usually require algebraic calculations in the global system level, or some combinatorial tools like matroid \cite{zhang2019structural}, rather than simple topological conditions herein.}}


The rest is organized as follows. Section II gives the problem formulation. Section III presents the main results, with Section IV providing the proofs. {Section V considers the case with matrix-weighted edges.}  Finally, some concluding remarks are given in Section VI.

{\emph{Notations:}}  For a set, $|\cdot |$ denotes its cardinality. For a matrix $M$, $M_{ij}$ or $[M]_{ij}$ denotes the entry in the $i$th row and $j$th column of $M$. $\sigma(M)$ denotes the set of eigenvalues of the square matrix $M$, and ${\bf diag}\{X_i|_{i=1}^n\}$ the block diagonal matrix whose $i$th diagonal block is $X_i$,  ${\bf col}\{X_i|_{i=1}^n\}$ the matrix stacked by $X_i|_{i=1}^n$. By $e_i^{[N]}$ we denote the $i$th column of the $N$ dimensional identify matrix $I_N$, and ${\bf 1}_{m\times n}$ the $m\times n$ matrix with entries all being one.
\section{Problem Formulation}\label{model_description}
Consider a networked system consisting of $N$ subsystems. Let ${\cal G}_{\rm sys}=({\cal V}_{\rm sys}, {\cal E}_{\rm sys})$ be an {\emph{undirected graph}} without self-loops describing the subsystem interconnection topology (i.e., underlying graph of the network topology), with ${\cal V}_{\rm sys}=\{1,...,N\}$, and an undirected edge $(i,j)\in {\cal E}_{\rm sys}$ if the $j$th subsystem and the $i$th one are directly influenced by each other. Dynamics of the $i$th subsystem is described by
\begin{equation} \label{sub_dynamic}
\dot x_i(t)= Ax_i(t) + bv_i(t),
\end{equation}
where $A\in {\mathbb R}^{n\times n}$, $b\in {\mathbb R}^{n}$, $x_i(t)\in {\mathbb R}^{n}$ is the state vector, $v_i(t)\in {\mathbb R}$ is the input injected to the $i$th subsystem. The input $v_i(t)$ may contain both subsystem interactions and the external control inputs, expressed as
\begin{equation} \label{sub_interaction}
v_i(t)=\delta_i u_i(t)+ \sum \nolimits_{j=1,j\ne i}^N{W_{ij}}C(x_j(t)-x_i(t)),
\end{equation}
where $u_i(t)$ is the external control input, $C\doteq[c_1^{\intercal},\cdots,c_r^{\intercal}]^{\intercal}\in{\mathbb R}^{r\times n}$ is the subsystem output matrix with $c_k\in {\mathbb R}^{1\times n}$, and $W_{ij}\in {\mathbb R}^{1\times r}$ is the vector-valued weight of edge from the $j$th subsystem to the $i$th one. Denote the $k$th element of $W_{ij}$ by $w^{[k]}_{ij}$, $k=1,...,r$, i.e.,
 $w^{[k]}_{ij}\in {\mathbb R}$ is the weight imposed on $c_k(x_j(t)-x_i(t))$. Moreover, $\delta_i\in \{0,1\}$, $\delta_i=1$ means that the $i$th subsystem is directly controlled by the external input $u_i(t)$, and $\delta_i=0$ means the contrary.  In addition, $W_{ij}=W_{ji}$ for $i,j\in\{1,...,N\}$, and $W_{ij} \ne 0$ only if $(j,i)\in {\cal E}_{\rm sys}$ (,$i\ne j$).


Let $\Delta={\bf diag}\{\delta_i|_{i=1}^N\}$, $u(t)=[u_1(t),...,u_N(t)]^{\intercal}$, $x(t)=[x_1^{\intercal}(t),...,x_N^{\intercal}(t)]^{\intercal}$. Define matrix $L_k\in {\mathbb R}^{N\times N}$ as $[L_k]_{ij}=- w_{ij}^{[k]}$ if $i \ne j$, and $[L_k]_{ij}=\sum\nolimits_{p = 1,p \ne i}^N {w_{ip}^{[k]}}$ if $i=j$, for $k=1,...,r$. Then, $L_1,...,L_r$ are (scalar-weighted) Laplacian matrices associated with the undirected graph ${\cal G}_{\rm sys}$. The lumped state-space representation of the networked system then is
\begin{equation} \label{lump_ss}
\dot x(t)=A_{\rm sys}x(t) + B_{\rm sys}u(t),
\end{equation}with
\begin{equation} \label{lump_pp} \begin{aligned}
 A_{\rm sys}&=I_N\otimes A-\sum \nolimits_{k=1}^r L_k\otimes(bc_k)\\
 &=I_N\otimes A-(I_N\otimes b) L_g (I_N\otimes C),\\ B_{\rm sys}&=\Delta \otimes b, \end{aligned}\end{equation}where $\otimes$ denotes the Kronecker product, and {\small\begin{equation}\label{vector_weight} L_g=\left[
           \begin{array}{cccc}
             \sum \nolimits_{j=1}^N W_{1j} & -W_{12} & \cdots & -W_{1N} \\
             -W_{21} & \sum \nolimits_{j=1}^N W_{2j} & \cdots & -W_{2N} \\
             \vdots & \vdots & \ddots & \vdots \\
             -W_{N1} & -W_{N2} & \cdots & \sum \nolimits_{j=1}^N W_{Nj} \\
           \end{array}
         \right]\in {\mathbb R}^{N\times Nr}
\end{equation}}is a {\emph{vector-weighted Laplacian}} \cite{tuna2016synchronization}.
Throughout this paper, without losing of generality, assume that $c_k\ne 0$, for $k=1,...,r$.

The (\ref{sub_dynamic})-(\ref{sub_interaction}) models a diffusive networked system with identical subsystems, which arises in modeling interacted liquid tanks \cite{Modern_Control_Ogata}, synchronizing networks of linear oscillators \cite{LucaSynchronization,zhang2019structural}, electrical networks \cite{tuna2017observability}, consensus-based MASs \cite{mesbahi2010graph}, etc. Specially, when $r=1$, (\ref{sub_dynamic})-(\ref{sub_interaction}) becomes a networked system with SISO subsystems. {Readers are referred to \cite{tuna2016synchronization,zhang2019Structural_arxiv} for more examples for networked systems with vector-weighted edges. }

\begin{definition} Given $A,b,C,\Delta$ and an undirected ${\cal G}_{\rm sys}$, the networked system (\ref{sub_dynamic})-(\ref{sub_interaction}) is said to be {\rm{structurally controllable}}, if there exists a set of values for $\{W_{ij}\}_{(j,i)\in {\cal E}_{\rm sys}}$ with $W_{ij}=W_{ji}$, such that the associated $(A_{\rm sys}, B_{\rm sys})$ is controllable.
\end{definition}

{In line with \cite{menara2018structural}}, it can be shown that controllability of the networked system (\ref{sub_dynamic})-(\ref{sub_interaction}) is a {\emph{generic property}} in the sense that, if this system is structurally controllable, then for almost all values for $\{W_{ij}\}_{(j,i)\in {\cal E}_{\rm sys}}$ with $W_{ij}=W_{ji}$, the corresponding system is controllable. In practise, due to parameter uncertainties or geographical distance between subsystems, the exact weights $W_{ij}$ might be hard to know. Under such circumstance, structural controllability may be a good alternative for controllability evaluation. The main problem considered in this paper is as follows.

\begin{problem} \label{prob1} Given $A,b,C,\Delta$ and an undirected subsystem interaction topology ${\cal G}_{\rm sys}$, find necessary and sufficient conditions under which
the system (\ref{sub_dynamic})-(\ref{sub_interaction}) is structurally controllable.
\end{problem}

\vspace*{-1.0em}
\section{Main Results} \label{main_result}
In this section, we first give necessary and sufficient conditions for Problem \ref{prob1}. We then extend them to the case with semi-symmetric constrained topologies. All proofs are postponed to Section \ref{proofs}.

Let ${\cal I}_u=\{i: \delta_i\ne 0\}$ be the set of indices of subsystems that are directly influenced by external inputs, and  ${\cal U}=\{u_i: i\in {\cal I}_u\}$. Let $\bar {\cal G}_{\rm sys}=({\cal V}_{\rm sys}\cup {\cal U}, {\cal E}_{\rm sys}\cup {\cal E}_{ux})$, where ${\cal E}_{ux}=\{(u_i,i), i\in {\cal I}_u\}$. Then, $\bar {\cal G}_{\rm sys}$ reflects the information flows of the networked system. A path from vertex $i_1$ to vertex $i_p$ is a sequence of edges $(i_1,i_2),(i_2,i_3),\cdots,(i_{p-1},i_p)$, where each edge is either directed or undirected.

\begin{definition} \label{definition input-reachability}
We say a vertex $i$ is input-reachable, if there exists a path beginning from any $u \in {\cal U}$ and ending at $i$ in $\bar {\cal G}_{\rm sys}$. If every vertex of $i\in {\cal V}_{\rm sys}$ is input-reachable, we just say that $\bar {\cal G}_{\rm sys}$ (or the network topology) is globally input-reachable ({i.e., global input-reachability  means that $\bar {\cal G}_{\rm sys}$ can be decomposed into a collection of disjoint spanning trees rooted at $\cal U$}).
\end{definition}

\begin{theorem} \label{theorem_simo}
Assume that $|{\cal I}_u|<N$. Then the networked system (\ref{sub_dynamic})-(\ref{sub_interaction}) is structurally controllable, if and only if 

1) $(A,b)$ is controllable, and $(A, [c_1^{\intercal},...,c_r^{\intercal}]^{\intercal})$ is observable;

2) $\bar {\cal G}_{\rm sys}$ is globally input-reachable.
\end{theorem}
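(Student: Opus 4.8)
The plan is to use the PBH (Popov--Belevitch--Hautus) eigenvector test on the pair $(A_{\rm sys}, B_{\rm sys})$ and translate it into conditions on $A,b,C,\Delta$ and the graph. Necessity and sufficiency will be handled separately. For necessity, I would show that if either condition fails, there is a left eigenvector of $A_{\rm sys}$ orthogonal to the columns of $B_{\rm sys}$ for \emph{every} admissible choice of $\{W_{ij}\}$. If $(A,b)$ is not controllable, pick a left eigenvector $q^{\intercal}$ of $A$ with $q^{\intercal}b=0$; then $(\mathbf 1_N^{\intercal}\otimes q^{\intercal})$ or, more carefully, any vector of the form $(\xi^{\intercal}\otimes q^{\intercal})$ annihilates $\sum_k L_k\otimes(bc_k)$ (because of the $b$ on the left of $bc_k$) and annihilates $\Delta\otimes b$, so controllability fails identically in $W$. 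If $(A,C)$ is not observable, dualize: there is a right eigenvector $p$ of $A$ with $Cp=0$, so $\sum_k L_k\otimes (bc_k)(\xi\otimes p)=0$, and one shows $(\xi^{\intercal}\otimes p^{\intercal}$-type or a suitable left eigenvector argument) obstructs controllability — here I would invoke the standard fact that for a networked system with this diffusive structure, non-observability of the node pair forces an uncontrollable mode (this is the ``only if'' direction analogous to existing heterogeneous-network rank conditions). If $\bar{\mathcal G}_{\rm sys}$ is not globally input-reachable, there is a nonempty set $S\subseteq\mathcal V_{\rm sys}$ with no input and no incoming edge from outside $S$; restricting to $S$ gives an invariant subspace of $A_{\rm sys}^{\intercal}$ disjoint from the range of $B_{\rm sys}$, again independent of $W$.

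**Sufficiency — the main work.** Assuming 1) and 2), I must exhibit \emph{one} symmetric choice of $\{W_{ij}\}$ making $(A_{\rm sys},B_{\rm sys})$ controllable; genericity then follows from the remark after the Definition (controllability is a generic property, à la \cite{menara2018structural}). The natural route is again PBH: suppose $q^{\intercal}A_{\rm sys}=\lambda q^{\intercal}$ and $q^{\intercal}B_{\rm sys}=0$; partition $q=\mathrm{col}\{q_i|_{i=1}^N\}$ with $q_i\in\mathbb R^n$. The relation $q^{\intercal}B_{\rm sys}=0$ gives $q_i^{\intercal}b=0$ for all $i$ with $\delta_i=1$. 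From $q^{\intercal}(I_N\otimes A - \sum_k L_k\otimes bc_k)=\lambda q^{\intercal}$, reading off the $i$th block yields $q_i^{\intercal}A - \big(\sum_j [L_g]_{ij}\text{-terms}\big)\ldots = \lambda q_i^{\intercal}$, i.e. each $q_i^{\intercal}$ satisfies a perturbed left-eigenvector equation for $A$ where the perturbation is a linear combination of the rows $c_1,\dots,c_r$. I would then argue, using controllability of $(A,b)$ and observability of $(A,C)$, that along an input-reachable path the constraint $q_i^{\intercal}b=0$ propagates: at an input node $q_i^{\intercal}b=0$; feeding this through the eigenvector equation and the observability of $(A,C)$ forces the coupling contributions to vanish, which with a generic (e.g. random, or specifically chosen) $W$ forces the neighbouring $q_j$ to also be a genuine left eigenvector of $A$ annihilating $b$, hence (by a dimension/independence count across the whole spanning-forest rooted at $\mathcal U$) all $q_i=0$. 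I expect to need a careful induction on the BFS layers of the spanning trees in Definition \ref{definition input-reachability}, choosing the edge weights layer by layer (or invoking algebraic genericity once the ``full-rank somewhere'' witness is built) to guarantee the propagation does not stall.

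**Key technical lemma I would isolate.** The crux is a ``local'' statement: for the single perturbed eigenvalue problem $q^{\intercal}(A - b\,(\sum_k \alpha_k c_k)) = \lambda q^{\intercal}$ with $q^{\intercal}b=0$, controllability of $(A,b)$ and observability of $(A,C)$ imply $q=0$ unless $\lambda\in\sigma(A)$ in a degenerate way — essentially that the only obstruction is genuinely spectral and can be killed by perturbing weights. More precisely I would prove: if $(A,b)$ is controllable and $(A,\mathrm{col}\{c_k\})$ is observable, then for generic scalars the matrix $A - b(\sum_k\alpha_k c_k)$ and $A$ share no left eigenvector annihilating $b$; combined with the graph-theoretic propagation this closes sufficiency. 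I would use the Rosenbrock/transfer-function viewpoint: $q^{\intercal}b=0$ and the eigenvector relation say $q^{\intercal}$ is a left zero direction of the system pencil $\left[\begin{smallmatrix} \lambda I - A & -b \\ C & 0\end{smallmatrix}\right]$, and controllability of $(A,b)$ plus observability of $(A,C)$ means this pencil has full row rank for all $\lambda$, giving $q=0$ directly once the coupling terms are arranged to act like the output feedback $C$.

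**Main obstacle.** The hardest part is the bookkeeping in sufficiency: showing that the Kronecker-structured eigenvector equation, together with the diffusive (Laplacian) coupling in which the node self-loop and its incident edge weights are \emph{not independent}, still permits a choice of symmetric $W$ that decouples the modes — in particular handling the eigenvalue $\lambda$ that might coincide with an eigenvalue of $A$, where the naive perturbation argument is most delicate. Managing the dependency between diagonal and off-diagonal blocks of $L_g$ while pushing the induction along the spanning forest is where the real effort lies; everything else is standard PBH / genericity machinery.
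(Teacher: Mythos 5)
Your overall plan (PBH plus a propagation/induction along the spanning forest) is a reasonable outline, and your necessity sketch is broadly in line with the paper's (which simply cites prior results for condition 1 and uses the block-triangular/PBH argument for condition 2 — note, though, that a \emph{right} eigenvector $p$ with $Cp=0$ obstructs observability, not controllability, so your unobservability direction as written does not produce an uncontrollable mode; that direction genuinely needs the cited result). The real problem is in sufficiency, where the step you yourself identify as the crux is not supplied and the ``key technical lemma'' you isolate is false as stated. The pencil $\left[\begin{smallmatrix}\lambda I-A & -b\\ C & 0\end{smallmatrix}\right]$ is $(n+r)\times(n+1)$, so it cannot have full \emph{row} rank once $r\ge 2$; controllability of $(A,b)$ together with observability of $(A,\mathrm{col}\{c_k\})$ therefore does not give ``$q=0$ directly.'' What is actually needed (and what the paper proves as its Lemma~3) is a \emph{generic} full-rank statement for the square matrix $\left[\begin{smallmatrix}\lambda_0 I-A & b\Lambda\\ -C & I_r-Q_0\Lambda\end{smallmatrix}\right]$, where $\Lambda\in\mathbb R^{1\times r}$ is a row of free edge-weight parameters and $Q_0$ is an \emph{arbitrary} constant produced by the already-processed portion of the tree. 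Its proof is not a pencil-rank observation: it requires restricting to the null space of $[\lambda_0 I-A,\ \mathbf 1_{1\times r}\otimes b]$ and a determinant/degree argument isolating the monomial $\prod_{k\ne j}s_k^{-1}$ with coefficient $[Cx_0]_j\ne 0$. Nothing in your proposal substitutes for this.

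Two further gaps. First, the left-eigenvector block equations read $q_j^{\intercal}(A-\lambda I)=\sum_k\bigl(\sum_i [L_k]_{ij}\,q_i^{\intercal}b\bigr)c_k$, so the coupling enters only through the scalars $q_i^{\intercal}b$; your claimed propagation (``observability forces the coupling contributions to vanish, hence the neighbouring $q_j$ is again an eigenvector annihilating $b$'') is precisely the step that fails naively when $\lambda\in\sigma(A)$, and you give no mechanism to rule it out. Second, the symmetric dependency between the diagonal and off-diagonal entries of each $L_k$ means you cannot choose weights ``layer by layer'' as independent parameters; the paper resolves this by factoring $L_k=-K\Lambda_k K_I$ through the incidence matrix so that the free parameters become the diagonal of $\Lambda_k$, and then runs the induction on the tree via a Schur-complement reduction to the square matrix above. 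You name this dependency as the main obstacle but propose no device to handle it, so the sufficiency argument does not close.
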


In the above theorem, we have ruled out the trivial case where $|{\cal I}_u|=N$, under which the networked system is always structurally controllable whenever $(A,b)$ is controllable (which is always necessary for the networked system to be controllable \cite{Y_Zhang_2016}). The above theorem implies that, if each subsystem is controllable and observable, and the networked topology is globally input-reachable, then for almost all vector-valued weights, the corresponding networked system is controllable. 
\begin{example}
Consider the mass-spring-damper system which consists of $N$ subsystems shown in Fig. \ref{damping_vehicle} (also used in \cite{zhang2019structural,Modern_Control_Ogata}). Let $\mu_i$ and $k_i$ denote the constants of the $i$th damper and spring, respectively, and $m$ is the mass, which is {\emph{identical}} for all subsystems. Let $x_i$ be the placement of the mass. Then, dynamics of the $i$th mass can be described as\begin{equation}\label{vehicle_damp}\begin{array}{l}
{{\ddot x}_i} = m^{ - 1}{\mu _i}({{\dot x}_{i - 1}} - {{\dot x}_i}) + m^{ - 1}{k_{i + 1}}({x_{i + 1}} - {x_i})\\
{\kern 1pt} {\kern 1pt} {\kern 1pt} {\kern 1pt} {\kern 1pt} {\kern 1pt} {\kern 1pt} {\kern 1pt} {\kern 1pt} {\kern 1pt} {\kern 1pt} {\kern 1pt} {\kern 1pt} {\kern 1pt} {\kern 1pt} {\kern 1pt} {\kern 1pt} {\kern 1pt} {\kern 1pt} {\kern 1pt} {\kern 1pt} {\kern 1pt}  - m^{ - 1}{\mu _{i + 1}}({{\dot x}_i} - {{\dot x}_{i + 1}}) - m^{ - 1}{k_i}({x_i} - {x_{i - 1}}) + m^{ - 1}{u_i}
\end{array}\end{equation}
where $u_i$ is the force imposed on the $i$th mass, and $x_0\equiv 0$, $\mu_{N+1}=0$, $k_{N+1}=0$. The above equation can be rewritten as
{\tiny
\[\begin{array}{l}
\left[ \begin{array}{l}
{{\dot x}_{i1}}\\
{{\dot x}_{i2}}
\end{array} \right] = \left[ {\begin{array}{*{20}{c}}
0&1\\
0&0
\end{array}} \right]\left[ \begin{array}{l}
{x_{i1}}\\
{x_{i2}}
\end{array} \right] + \underbrace {\left[ \begin{array}{l}
0\\
1
\end{array} \right]}_{b} \left\{ [\!\!\underbrace {\frac{{{k_i}}}{{{m}}}}_{w_{i,i - 1}^{[1]}},\underbrace{\frac{{{\mu _i}}}{{{m}}}}_{w_{i,i - 1}^{[2]}}\!\!]\underbrace {\left[ {\begin{array}{*{20}{c}}
1&0\\
0&1\\
\end{array}} \right]}_{C}\left[{\begin{array}{*{20}{c}}
{{x_{i - 1,1}} - {x_{i1}}}\\
{{x_{i - 1,2}} - {x_{i2}}}
\end{array}} \right] \right. \\
\left.+ [\!\!\underbrace {\frac{{{k_{i + 1}}}}{{{m}}}}_{w_{i,i + 1}^{[1]}},\underbrace {\frac{{{\mu _{i + 1}}}}{{{m}}}}_{w_{i,i + 1}^{[2]}}\!\!]\underbrace {\left[ {\begin{array}{*{20}{c}}
1&0\\
0&1\\
\end{array}} \right]}_{C}\left[ {\begin{array}{*{20}{c}}
{{x_{i + 1,1}} - {x_{i1}}}\\
{{x_{i + 1,2}} - {x_{i2}}}
\end{array}} \right] + \frac{{{u_i}}}{{{m}}}\right\}
\end{array},\]}where $x_{i1}=x_i$, $x_{i2}=\dot x_{i}$. From the above formula, $w_{i,i-1}^{[1]}=\frac{k_i}{m}=w_{i-1,i}^{[1]}$, and $w_{i,i-1}^{[2]}=\frac{\mu_i}{m}=w_{i-1,i}^{[2]}$. Hence, the whole system is an undirected networked system with single-input-$2$-output subsystems. The underlying graph of the network topology is a chain shown in Fig. \ref{graph} with vector-weighted edges $\{[w_{i,j}^{[1]},w_{i,j}^{[2]}]\}|_{i=1,j=i\pm 1}^N$. By Theorem \ref{theorem_simo}, assuming that $\{\mu_i,k_i|_{i=1}^N\}$ are algebraically independent, we know that this system is structurally controllable by driving arbitrarily one subsystem. \hfill $\square$

\begin{figure}
  \centering
  \includegraphics[width=3.0in]{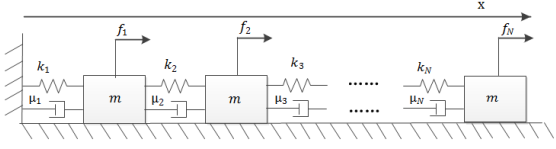}
  \caption{The mass-spring-damper system}\label{damping_vehicle} 
\end{figure}
\begin{figure}
  \centering
  \includegraphics[width=2.4in]{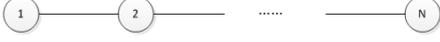}
  \caption{The underlying graph of the mass-spring-damper system}\label{graph} 
\end{figure}
\end{example}

Two direct corollaries are listed as follows.

\begin{corollary} \label{undirecte_la} Let $L$ be the weighted Laplacian matrix of a connected undirected graph $\cal G$ with $N$ vertices. Then, for almost all weights of edges of $\cal G$, $(-L,e_i^{[N]})$ is controllable, $\forall i\in\{1,...,N\}$.
\end{corollary}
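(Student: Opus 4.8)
The plan is to obtain this as a one-line specialization of Theorem~\ref{theorem_simo}. In the model (\ref{sub_dynamic})--(\ref{sub_interaction}) I would take $n=1$, $A=0$, $b=1$, $r=1$, $C=c_1=1$ (which respects the standing assumption $c_k\ne 0$), and let $\mathcal{G}_{\rm sys}$ be the given graph $\mathcal{G}$. With these data the scalar-weighted Laplacian $L_1$ of Section~\ref{model_description} is exactly the weighted Laplacian $L$ of $\mathcal{G}$, and (\ref{lump_pp}) collapses to $A_{\rm sys}=I_N\otimes 0-L_1\otimes(1\cdot 1)=-L$. Fixing an index $i$ and choosing $\Delta$ to be the diagonal matrix with a single $1$ in position $i$ (so $\mathcal{I}_u=\{i\}$), the matrix $B_{\rm sys}=\Delta\otimes b$ has $e_i^{[N]}$ as its only nonzero column; hence controllability of $(A_{\rm sys},B_{\rm sys})$ is equivalent to controllability of $(-L,e_i^{[N]})$.

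Next I would check the two conditions of Theorem~\ref{theorem_simo}. Condition 1) is immediate since $n=1$: $(A,b)=(0,1)$ is controllable and $(A,[c_1^{\intercal}])=(0,1)$ is observable. Condition 2): the augmented graph $\bar{\mathcal{G}}_{\rm sys}$ is $\mathcal{G}$ together with the extra vertex $u_i$ and the edge $(u_i,i)$; since $\mathcal{G}$ is connected, from $u_i$ one reaches $i$ and then, along undirected edges of $\mathcal{G}$, every other vertex, so $\bar{\mathcal{G}}_{\rm sys}$ is globally input-reachable. The hypothesis $|\mathcal{I}_u|<N$ of the theorem holds as long as $N\ge 2$; the case $N=1$ is trivial (no edges, $L=0$, and $(0,1)$ is controllable, with the ``almost all'' clause vacuous). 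Thus Theorem~\ref{theorem_simo} yields that this networked system is structurally controllable, and by the genericity property stated right after the definition of structural controllability, there is a full-measure set $S_i$ of (scalar, symmetric) edge weights of $\mathcal{G}$ for which $(-L,e_i^{[N]})$ is controllable.

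Finally, since the claim is asserted for every $i\in\{1,\dots,N\}$, I would intersect: $\bigcap_{i=1}^N S_i$ is a finite intersection of full-measure subsets of the edge-weight space, hence itself of full measure, and for every weight vector in it the pair $(-L,e_i^{[N]})$ is controllable for all $i$. I do not anticipate a real obstacle here; the only points needing a little care are verifying that the single nonzero column of $B_{\rm sys}$ is precisely $e_i^{[N]}$ so that the $N$-input lumped pair reduces to the single-input pair in the statement, and the routine bookkeeping for $N=1$ and for the finite intersection over $i$.
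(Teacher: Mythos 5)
Your proposal is correct and is exactly the specialization the paper intends: Corollary~\ref{undirecte_la} is stated as a direct consequence of Theorem~\ref{theorem_simo} with $n=r=1$, $A=0$, $b=c_1=1$, and $\Delta=e_i^{[N]}(e_i^{[N]})^{\intercal}$, combined with the genericity remark following the definition of structural controllability. Your additional care about the single nonzero column of $B_{\rm sys}$, the $N=1$ edge case, and the finite intersection over $i$ to handle the ``$\forall i$'' quantifier is sound and fills in details the paper leaves implicit.
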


\begin{corollary} \label{theorem_SISO}Suppose that in the networked system (\ref{sub_dynamic})-(\ref{sub_interaction}), every subsystem is {SISO (i.e., $r=1$)}, and assume that $|{\cal I}_u|< N$. Then this system is structurally controllable, if and only if $(A,b)$ is controllable, $(A, C)$ is observable, and $\bar {\cal G}_{\rm sys}$ is globally input-reachable.
\end{corollary}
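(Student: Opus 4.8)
The plan is to obtain this statement directly from Theorem~\ref{theorem_simo} by specializing to the case $r=1$. First I would observe that when $r=1$ the subsystem output matrix $C=[c_1^{\intercal}]^{\intercal}=c_1\in{\mathbb R}^{1\times n}$ is a single row, so the stacked observation matrix $[c_1^{\intercal},\dots,c_r^{\intercal}]^{\intercal}$ in condition~1) of Theorem~\ref{theorem_simo} is exactly $C$ itself. Hence condition~1) reads precisely ``$(A,b)$ is controllable and $(A,C)$ is observable,'' while condition~2) is unchanged, so the hypotheses claimed in the corollary are nothing but the $r=1$ instance of those in Theorem~\ref{theorem_simo}.

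Next I would check that the model data specialize consistently: with $r=1$ each edge weight $W_{ij}\in{\mathbb R}^{1\times 1}$ is a scalar, the symmetry constraint $W_{ij}=W_{ji}$ is retained, the vector-weighted Laplacian $L_g$ in (\ref{vector_weight}) collapses to the ordinary scalar-weighted Laplacian $L_1$, and $A_{\rm sys}=I_N\otimes A-L_1\otimes(bC)$, $B_{\rm sys}=\Delta\otimes b$. Thus structural controllability in the sense of Definition 1 for the SISO network coincides with the $r=1$ case of the general notion, and the hypothesis $|{\cal I}_u|<N$ carries over verbatim. The standing assumption $c_k\ne 0$ reduces to $C\ne 0$, which is without loss of generality: if $C=0$ then $A_{\rm sys}=I_N\otimes A$ is block diagonal and, because $|{\cal I}_u|<N$, $(A_{\rm sys},B_{\rm sys})$ cannot be controllable, so neither side of the asserted equivalence holds in that degenerate case.

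Applying Theorem~\ref{theorem_simo} to this specialization then yields the corollary. I do not expect a genuine obstacle here; the argument is essentially bookkeeping. The only point deserving a moment's care is confirming that the classical scalar-weighted SISO diffusive network — including the dependency between the self-loop weight at each vertex and the weights of its incident edges, which is built into (\ref{lump_pp})--(\ref{vector_weight}) — is literally recovered as the $r=1$ case, so that Theorem~\ref{theorem_simo} applies with no additional assumptions and the corollary follows immediately.
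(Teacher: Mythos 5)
Your proposal is correct and matches the paper's treatment: the paper lists this as a direct corollary of Theorem~\ref{theorem_simo} with no separate proof, and your specialization to $r=1$ (noting that the stacked output matrix reduces to $C$ itself and the scalar-weighted Laplacian is recovered) is exactly the intended argument. The extra remark handling the degenerate case $C=0$ is fine but not needed, since the paper's standing assumption $c_k\ne 0$ already excludes it.
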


\begin{remark}\label{remarkxx}{Structural controllability of undirected networks of single-integrators running the consensus protocol has been discussed in \cite{goldin2013weight}.} Corollary \ref{undirecte_la} differs from \cite{goldin2013weight}, as the result of \cite{goldin2013weight} is under the condition that the total sum of each row of the lumped state transition matrix $A_{\rm sys}$ and input matrix $B_{\rm sys}$ is zero, rather than that the sum of each row of $A_{\rm sys}$ is zero.
\end{remark}

\begin{remark} \label{equal_weight}
If the underlying graph of the network topology is scalar-weighted, i.e., $L_1=\cdots =L_k=\bar L$, structural controllability of this kind of networked systems falls into the SISO case with subsystem output matrix being $c_1+\cdots+c_k$ and subsystem input matrix being $b$, noting that $A_{\rm sys}=I\otimes A-\sum \nolimits_{k=1}^r L_k\otimes(bc_k)=I\otimes A- \bar L\otimes(b(c_1+\cdots c_k))$.
 It is easy to see that observability of $(A,c_1+...+c_r)$ implies that $(A, [c_1^{\intercal},...,c_r^{\intercal}]^{\intercal})$ is observable, while the converse is not necessarily true. Compared with the former case, this verifies the intuition that allowing vector-valued weights makes the conditions for structural controllability less restrictive than that of scalar-valued ones.
\end{remark}

We are now extending Theorem \ref{theorem_simo} to the case where ${\cal G}_{\rm sys}$ contains both directed and undirected edges. That is to say, not all off-diagonal entries of the Laplacian matrices $L_i|_{i=1}^r$ need to be equal to their symmetric ones, {and a nonzero entry of $L_i|_{i=1}^r$ may even have a symmetric entry which is fixed zero}.  A pair of symmetrically equal entries of $L_i$ correspond to an undirected edge, whiles an entry not equaling its symmetrical one corresponds to a directed edge, for $i=1,...,r$. We call such constraint as the {\emph{semi-symmetric constraint}} with a little abuse of terminology. Semi-symmetric constraints may emerge, such as, in a networked system where both bidirectional interactions and unidirectional interactions exist. {Semi-symmetric constrained topologies cover both the directed topology and undirected one, and are more general than them. }



Given a semi-constrained topology ${\cal G}_{\rm sys}$, let $\bar {\cal G}_{\rm sys}$ be defined in the same way as before in this section. We say  $\bar {\cal G}_{\rm sys}$ is globally input-reachable, if for each vertex $i\in {\cal V}_{\rm sys}$, there is a path consisting of either directed, undirected, or both directed and undirected edges beginning from a $u\in {\cal U}$ ending at $i$.


\begin{theorem} \label{theorem_semi_symmetric}
Consider the networked system (\ref{sub_dynamic})-(\ref{sub_interaction}) with semi-symmetric constrained topology ${\cal G}_{\rm sys}$. Assume that $|{\cal I}_u|<N$. The system is structurally controllable, if and only if $(A,b)$ is controllable, $(A, [c_1^{\intercal},...,c_r^{\intercal}]^{\intercal})$ is observable, and $\bar {\cal G}_{\rm sys}$ is globally input-reachable.
\end{theorem}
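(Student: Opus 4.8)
The plan is to run the same two-part argument that underlies Theorem~\ref{theorem_simo}, and to check that relaxing symmetry on part of the edges changes neither the list of necessary conditions nor the sufficiency construction, as long as ``input-reachability'' is read off the mixed directed/undirected paths of $\bar{\cal G}_{\rm sys}$. The characterization being literally the same as in Theorem~\ref{theorem_simo}, the statement morally says that oriented interactions add nothing except the need to respect orientation when measuring reachability.

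\emph{Necessity.} None of the three obstructions uses symmetry of the Laplacians $L_k$. If $(A,b)$ is not controllable, pick $q\ne 0$ with $q^{\intercal}A=\lambda q^{\intercal}$ and $q^{\intercal}b=0$; then $\zeta\otimes q$ is a left eigenvector of $A_{\rm sys}$ with $(\zeta\otimes q)^{\intercal}B_{\rm sys}=0$ for every $\zeta$, so the PBH test fails. If $(A,[c_1^{\intercal},\dots,c_r^{\intercal}]^{\intercal})$ is not observable, pick $q\ne 0$ in the unobservable subspace with $Aq=\lambda q$; then $c_kq=0$ for all $k$, hence $\zeta\otimes q$ is a \emph{right} eigenvector of $A_{\rm sys}$ for every $\zeta\in{\mathbb C}^{N}$, so $\dim\ker(\lambda I-A_{\rm sys})\ge N$ and, using $\mathrm{rank}\,B_{\rm sys}=|{\cal I}_u|<N$, subadditivity of rank gives $\mathrm{rank}[\lambda I-A_{\rm sys},\,B_{\rm sys}]\le Nn-N+|{\cal I}_u|<Nn$. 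Finally, if some vertex is not input-reachable, the set ${\cal S}$ of non-input-reachable vertices receives no edge from ${\cal V}_{\rm sys}\setminus{\cal S}$ and none from ${\cal U}$, so $\{x_v:v\in{\cal S}\}$ evolves autonomously for \emph{every} admissible weight family and controllability is impossible.

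\emph{Sufficiency, two reductions.} Since controllability of the networked system is generic, it suffices to exhibit one admissible weight family for which $(A_{\rm sys},B_{\rm sys})$ is controllable. First I would reduce to $r=1$: controllability of $(A,b)$ forces $A$ to be non-derogatory, so $(A,\sum_k w^{[k]}c_k)$ is observable for all $(w^{[1]},\dots,w^{[r]})$ outside a proper subvariety; fixing a generic constant row $W^{0}\in{\mathbb R}^{1\times r}$ and setting $W_{ij}=w_{ij}W^{0}$ turns $(A_{\rm sys},B_{\rm sys})$ into the lumped pair of a SISO networked system with output $W^{0}C$ and scalar weights $w_{ij}$, while the semi-symmetric constraint is preserved ($w_{ij}=w_{ji}$ exactly on the undirected edges). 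Second, by global input-reachability $\bar{\cal G}_{\rm sys}$ contains a spanning forest rooted at ${\cal U}$; zeroing the weights of all edges outside the forest and activating each forest edge (activating an undirected forest edge also activates its symmetric partner, which can only enlarge reachability) we may assume the topology \emph{is} that forest. Because each $i\in{\cal I}_u$ lies in the tree rooted at $u_i$ and distinct trees are rooted at distinct inputs, $A_{\rm sys}$ is block-diagonal along the trees with disjoint input channels, so controllability of the whole is equivalent to controllability of each single-input tree; it remains to treat one tree.

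\emph{Induction along the tree, and the hard step.} On one tree I would induct from the root in BFS order, keeping the invariant that the sub-pair $(A_{\rm sys}^{(t)},B_{\rm sys}^{(t)})$ on the first $t$ vertices is controllable for a suitable choice of the weights activated so far; the base $t=1$ is exactly the controllable pair $(A,b)$. When the $(t{+}1)$-st vertex $v$ is attached to its parent $p$, the new diagonal block is $A-b(W_{vp}C)$, the block $(v,p)$ is $b(W_{vp}C)$, and the block $(p,v)$ equals $b(W_{vp}C)$ for an undirected edge and $0$ for an edge oriented into $v$. Applying the PBH test to the enlarged pair and reading off the column block of $v$ yields $z_v^{\intercal}(A-\lambda I)=\beta\,(W_{vp}C)$ with $\beta=z_v^{\intercal}b$ (oriented case) or $\beta=z_v^{\intercal}b-z_p^{\intercal}b$ (undirected case); the sub-cases $z_v=0$ and $\beta=0$ are dispatched using only controllability of $(A,b)$ and the inductive hypothesis, while in the remaining case $A$ non-derogatory together with observability of $(A,[c_1^{\intercal},\dots,c_r^{\intercal}]^{\intercal})$ forces $\lambda\notin\sigma(A)$ for generic $W_{vp}$, after which $z_v$ is determined up to scale, $z$ is pinned down by a full-row-rank linear system, and a genericity argument in $W_{vp}$ shows the only solution is zero. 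I expect this attachment step to be the main obstacle: one needs a uniform ``no pole/zero cancellation'' statement — the newly attached realization $(A-b(W_{vp}C),\,b,\,W_{vp}C)$ is minimal for generic $W_{vp}$, and grafting a minimal realization by a generic gain onto an already-controllable network at a reachable node preserves controllability, whether the graft is a cascade (edge oriented into $v$) or a feedback interconnection (undirected edge). Necessity and the two reductions are routine by comparison.
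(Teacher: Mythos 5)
Your necessity arguments are correct and essentially match the paper's (which cites external results for condition 1 and uses the same block-triangular PBH argument for reachability), and your reduction to $r=1$ via a generic output combination $W^{0}C$ is a genuinely nice shortcut the paper does not use: since $b$ is a single column, controllability of $(A,b)$ makes $A$ cyclic, so observability of $(A,[c_1^{\intercal},\dots,c_r^{\intercal}]^{\intercal})$ transfers to $(A,W^{0}C)$ for generic $W^{0}$, and the semi-symmetric constraint survives $W_{ij}=w_{ij}W^{0}$. The spanning-forest reduction and the tree induction are also the same skeleton as the paper's Proposition \ref{rank_proposition}, and the one genuinely ``semi-symmetric'' ingredient (orient each undirected edge, keep directed ones, and factor $L_i=-K\Lambda_iK_I$ with a $K$ that is no longer $-K_I^{\intercal}$ on directed edges) is implicit in your cascade-versus-feedback case split.

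The gap is exactly where you place it, and what you write does not close it. Two concrete problems. (a) Your induction invariant (``the sub-pair on the first $t$ vertices is controllable'') is not preserved: attaching vertex $t{+}1$ by an \emph{undirected} edge adds $-bW_{vp}C$ to the \emph{parent's} diagonal block, so after the attachment the first $t$ blocks are no longer the pair you certified at step $t$. The paper avoids this by keeping the new weight symbolic and running the induction on the generic rank of a bordered Schur-complement matrix $\Psi_i$, so the perturbation of the parent is carried inside the border rather than assumed away. (b) More seriously, your direct PBH induction must exclude a nonzero left null vector at \emph{every} $\lambda\in{\mathbb C}$, not only at $\lambda\in\sigma(A)$. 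For $\lambda\notin\sigma(A)$ and $\beta\ne 0$, normalizing $\beta=1$ determines $z_v$ and $z_{1:t}$ and leaves two scalar compatibility conditions in the single unknown $\lambda$; ``a genericity argument in $W_{vp}$ shows the only solution is zero'' is plausible by parameter counting, but making it rigorous requires showing that a resultant in $\lambda$ is a not-identically-zero polynomial in $W_{vp}$ --- and that is the actual content to be supplied, not a routine check. The paper sidesteps the continuum of $\lambda$ entirely by invoking the linear-parameterization theorem (Lemma \ref{Theorem of Linear Parameterization}): away from $\sigma(A)$ everything reduces to a purely combinatorial cycle-reachability condition on fixed transfer matrices (Proposition \ref{reachable}, via Lemma \ref{lemma_equal_graph}), and only at the finitely many $\lambda_0\in\sigma(A)$ is a rank argument needed, which is precisely Lemma \ref{important} on the bordered matrix with blocks $\lambda_0I-A$, $b\Lambda$, $-C$, $I_r-Q_0\Lambda$. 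That lemma, together with the modified incidence factorization, is the substance of the theorem; your proposal correctly names the obstacle but leaves it as an assertion.
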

\vspace*{-1.2em}
\section{Analysis} \label{proofs}
This section gives the proofs of Theorem \ref{theorem_simo} and Theorem \ref{theorem_semi_symmetric}.

{\bf{\emph{ Proof of Necessary part of Theorem \ref{theorem_simo}}}}:  The first part of Condition 1) follows directly from Theorem 1 of \cite{Y_Zhang_2016}. The second part of Condition 1) is a direct derivation of Theorem 4 of \cite{L.Wa2016Controllability}. For Condition 2), suppose there are in total $p$ vertices that are not input-reachable in $\bar {\cal G}_{\rm sys}$. As ${\cal G}_{\rm sys}$ is undirected, by suitable reordering of vertices, $[A_{\rm sys}, B_{\rm sys}]$ has the following form
$$\left[
  \begin{array}{ccc}
    A_{11} & 0 & 0\\
    0 & A_{12} & \Delta_2\otimes b\\
  \end{array}
\right],$$
where $A_{11}$, $A_{22}$ and $\Delta_2$ are of dimensions of $np\times np$, $(N-p)n \times (N-p)n$, and $(N-p)n\times N$, respectively. This indicates that $(A_{\rm sys}, B_{\rm sys})$ cannot be controllable by the PBH test. \hfill $\square$

Our proof for sufficient part of Theorem \ref{theorem_simo} is based on the linear parameterization \cite{Morse_1976}. Consider a linear-parameterized pair $(A,B)$ modeled as
\begin{equation} \label{Linear Parameterization} A=A_0+\sum\nolimits_{i=1}^k g_is_ih_{1i}^{\intercal}, B=B_0+\sum\nolimits_{i=1}^k g_is_ih^{\intercal}_{2i}.\end{equation}
where $A_0\in {\mathbb R}^{n\times n}$, $B_0\in {\mathbb R}^{n\times m}$,  $g_i,h_{1i}\in {\mathbb R}^n$, $h_{2i}\in {\mathbb R}^m$, and $s_1,...,s_k$ are real free parameters. The pair $(A,B)$ in (\ref{Linear Parameterization}) is said to be structurally controllable, if there exists one set of values for $s_1,...,s_k$, such that the associated system is controllable.

\begin{definition} Given an $n\times n$ matrix $H$ and an $n\times m$ matrix $P$, the auxiliary digraph associated with $(H,P)$ is denoted by ${\cal G}_{\rm aux}(H,P)$, which is defined as the digraph $({\cal V}_{H}\cup {\cal V}_P, {\cal E}_{HH}\cup {\cal E}_{HP})$, where ${\cal V}_{H}=\{v_1,...,v_n\}$, ${\cal V}_P=\{u_1,...,u_m\}$, and ${\cal E}_{HH}=\{(v_i,v_j): H_{ji}\ne 0\}$, ${\cal E}_{HP}=\{(u_i,v_j): {P}_{ji}\ne 0\}$.\footnote{Here, $H_{ji}\ne 0$ means that $H_{ji}$ is not identically zero, so is with $P_{ji}\ne 0$.} A vertex $v\in {\cal V}_H$ is input-reachable in ${\cal G}_{\rm aux}(H,P)$, if there is a path from one vertex in ${\cal V}_P$ ending at $v$. A cycle of ${\cal G}_{\rm aux}(H,P)$ is said to be input-reachable, if there is at least one vertex in this cycle that is input-reachable.
\end{definition}

Define two transfer function matrices (TFMs) as $G_{zv}(\lambda)=[h_{11},...,h_{1k}]^{\intercal}(\lambda I-A_0)^{-1}[g_1,...,g_k]$, $G_{zu}(\lambda)=[h_{11},...,h_{1k}]^{\intercal}(\lambda I-A_0)^{-1}B_0+[h_{21},...,h_{2k}]^{\intercal}$. The following lemma gives necessary and sufficient conditions for $(A,B)$ in (\ref{Linear Parameterization}) to be structurally controllable.

\begin{lemma}[\cite{Morse_1976}, \cite{zhang2019structural}] \label{Theorem of Linear Parameterization}
The pair $(A,B)$ in (\ref{Linear Parameterization}) is structurally controllable, if and only if

1) Every cycle is input-reachable in ${\mathcal G}_{\rm aux}(G_{zv}(\lambda),G_{zu}(\lambda))$~;

2) For each $\lambda_0\in \sigma(A_0)$, ${\rm grank}[\lambda_0 I- A_0- \sum\nolimits_{i=1}^k g_is_ih_{1i}^{\intercal}, B_0+\sum\nolimits_{i=1}^k g_is_ih^{\intercal}_{2i}]=n$, where ${\rm grank}(\bullet)$ is the maximum rank a matrix can achieve as the function of its free parameters.
\end{lemma}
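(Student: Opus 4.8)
The plan is to recast the parameterization (\ref{Linear Parameterization}) as a static diagonal feedback and reduce controllability to a rank test on an augmented pencil. Writing $G=[g_1,\dots,g_k]$, $H_1=[h_{11},\dots,h_{1k}]^{\intercal}$, $H_2=[h_{21},\dots,h_{2k}]^{\intercal}$ and $S=\mathrm{diag}\{s_1,\dots,s_k\}$, we have $A=A_0+GSH_1$ and $B=B_0+GSH_2$, so $(A,B)$ is the closed loop obtained by feeding the fictitious output $z=H_1x+H_2u$ back into the fictitious input channel $G$ through the gain $S$. First I would show, by the elementary row operation that adds $G$ times the lower block rows to the upper block rows (valid for every $\lambda$, requiring no inversion), that
\[
\mathrm{rank}\begin{bmatrix}\lambda I-A_0 & B_0 & -G\\ -SH_1 & SH_2 & I_k\end{bmatrix}=k+\mathrm{rank}[\lambda I-A,\ B].
\]
Since structural controllability asks for one (hence, by Zariski density, a generic) value of $S$, I may assume $S$ invertible; scaling the lower block by $S^{-1}$ gives the equivalent pencil $R(\lambda)=\bigl[\begin{smallmatrix}\lambda I-A_0 & B_0 & -G\\ -H_1 & H_2 & T\end{smallmatrix}\bigr]$ with $T=S^{-1}$ a free diagonal matrix, and $(A,B)$ is controllable if and only if $\mathrm{rank}\,R(\lambda)=n+k$ for all $\lambda\in\mathbb{C}$.

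Next I would split on whether $\lambda$ is an open-loop pole. For $\lambda_0\in\sigma(A_0)$ the displayed identity (which holds for every $\lambda$) already gives $\mathrm{rank}\,R(\lambda_0)=n+k$ iff $\mathrm{rank}[\lambda_0 I-A,B]=n$; maximizing over the free parameters, this is achievable exactly when $\mathrm{grank}[\lambda_0 I-A,B]=n$, which is Condition 2. Since there are only finitely many such $\lambda_0$ and each cuts out a nonempty Zariski-open set of admissible $T$, a generic $T$ meets all of them at once. For $\lambda\notin\sigma(A_0)$ the block $\lambda I-A_0$ is invertible, so the Schur complement of $R(\lambda)$ with respect to it equals $[\,G_{zu}(\lambda),\ T-G_{zv}(\lambda)\,]$, whence $\mathrm{rank}\,R(\lambda)=n+k$ iff this $k\times(m+k)$ rational matrix has full row rank $k$. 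Everything then reduces to the claim that, for a generic diagonal $T$, $[\,G_{zu}(\lambda),\,T-G_{zv}(\lambda)\,]$ has rank $k$ for every $\lambda\notin\sigma(A_0)$ if and only if every cycle of $\mathcal{G}_{\mathrm{aux}}(G_{zv},G_{zu})$ is input-reachable (Condition 1).

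This graph-theoretic claim is the heart of the matter, and I expect it to be the main obstacle. For necessity I would let $\mathcal{R}$ be the set of input-reachable channel-vertices; since reachability is closed under successors, $J=\mathcal{V}_H\setminus\mathcal{R}$ is closed under predecessors and receives no input edge, so the rows of $[\,G_{zu},\,T-G_{zv}\,]$ indexed by $J$ vanish on the $G_{zu}$ columns and on the off-$J$ columns, leaving the diagonally perturbed principal block $T_J-[G_{zv}]_{JJ}(\lambda)$. Because $J$ carries a cycle, $[G_{zv}]_{JJ}$ is a nonzero strictly proper rational matrix, so $d(\lambda)=\det(T_J-[G_{zv}]_{JJ}(\lambda))$ satisfies $d(\infty)=\prod_{j\in J}t_j\neq0$ yet has genuine poles in $\sigma(A_0)$; counting zeros against poles then forces $d$ to have a finite zero, and since these zeros move with $T_J$ they cannot all stay trapped in the finite set $\sigma(A_0)$, so for generic $T$ some $\lambda^\ast\notin\sigma(A_0)$ makes the $J$-rows dependent and defeats controllability. (For a single self-loop this is merely solving $[G_{zv}]_{jj}(\lambda)=t_j$, which a nonconstant strictly proper function always does; the general case is the permutation expansion and degree count of the numerator of $d$.) For sufficiency I would argue directly: when every cycle is input-reachable, the $k\times k$ minors of $[\,G_{zu},T-G_{zv}\,]$ cannot all vanish identically, and a term-by-term matching analysis shows that the free diagonal $T$ prevents any common zero from being forced, so the finitely many candidate ``bad'' $\lambda$ are avoided by a generic $T$ uniformly in $\lambda$.

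Finally I would assemble the pieces: intersecting the finitely many nonempty Zariski-open conditions from the $\sigma(A_0)$ points (Condition 2) with the generic-$T$ condition from the cycle analysis (Condition 1) yields a nonempty generic set of $T$, equivalently of $S$, for which $\mathrm{rank}\,R(\lambda)=n+k$ for all $\lambda$, i.e. $(A,B)$ is controllable; conversely the failure of either condition forces a rank drop for generic parameter values, so, controllability being a Zariski-open condition, both are necessary. The delicate point throughout, and the step I would budget the most care for, is the uniform-in-$\lambda$ control of $[\,G_{zu}(\lambda),\,T-G_{zv}(\lambda)\,]$: I must rule out rank drops at \emph{all} $\lambda\notin\sigma(A_0)$, not merely at a generic $\lambda$, and it is precisely the input-reachability of every cycle that both creates (when it fails) and forbids (when it holds) such parameter-independent zeros.
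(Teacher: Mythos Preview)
The paper does not prove this lemma at all: it is simply quoted from \cite{Morse_1976} and \cite{zhang2019structural}, so there is no in-paper proof to compare against. Your outline is in fact the classical route taken in those references---recast the parameterization as decentralized diagonal feedback on the auxiliary plant $(A_0,[B_0\ G],H_1,H_2)$, pass to the augmented pencil, handle the finitely many $\lambda_0\in\sigma(A_0)$ by the generic-rank condition, and for $\lambda\notin\sigma(A_0)$ take the Schur complement to land on $[\,G_{zu}(\lambda),\,T-G_{zv}(\lambda)\,]$.

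That said, your treatment of the graph-theoretic core is not yet a proof. For necessity, ``since these zeros move with $T_J$ they cannot all stay trapped in the finite set $\sigma(A_0)$'' needs an actual argument: you must exhibit, for a given cycle in $J$, a monomial in the $t_j$'s appearing in the numerator of $\det(T_J-[G_{zv}]_{JJ}(\lambda))$ whose coefficient (a polynomial in $\lambda$) has strictly higher degree than the corresponding coefficient in the denominator's leading term, so that the zero locus genuinely varies with $T_J$. For sufficiency you offer only ``a term-by-term matching analysis shows that the free diagonal $T$ prevents any common zero from being forced,'' which is the entire content of the lemma and cannot be waved through. The standard way to close this gap is to exploit input-reachability to order the channel vertices so that each vertex either has an input edge or a predecessor of strictly lower index; one then exhibits, by induction along this order, a $k\times k$ submatrix of $[\,G_{zu}(\lambda),\,T-G_{zv}(\lambda)\,]$ whose determinant, viewed as a polynomial in $t_1,\dots,t_k$, has leading monomial with a \emph{nonzero constant} coefficient in $\lambda$ (coming from the diagonal of $T$) and hence cannot vanish identically nor acquire $\lambda$-zeros that persist for generic $T$. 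Without that explicit construction, the uniform-in-$\lambda$ statement you correctly flag as ``the delicate point'' remains open.
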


Due to dependencies among nonzero entries of $L_i$, before utilizing Lemma \ref{Theorem of Linear Parameterization}, we need to diagonalize $L_i$, $i=1,...,r$.
To this end, first arbitrarily assign an orientation to each undirected edge of ${\cal G}_{\rm sys}$, and let ${\cal E}_{\rm sys}=\{e_1,\cdots,e_{|{\cal E}_{\rm sys}|}\}$. Then construct the $|{\cal E}_{\rm sys}|\times |{\cal V}_{\rm sys}|$ incidence matrix $K_I$ as follows: $[K_I]_{ij}=1$ (, $[K_I]_{ij}=-1$) if vertex $j$ is the starting vertex (ending vertex) of $e_i$, and the remaining entries are zero. Then, define a $|{\cal V}_{\rm sys}|\times |{\cal E}_{\rm sys}|$ matrix $K$ as $K=-K_I^{\intercal}$.
It can be validated that $L_i=-K\Lambda_iK_I$, where $\Lambda_i$ is a diagonal matrix whose $j$th diagonal equals the weight of $e_j$ associated with $L_i$, $j=1,...,|{\cal E}_{\rm sys}|$.
We then have the following linear parameterization of $(A_{\rm sys}, B_{\rm sys})$
\begin{equation} \label{linear_pa2}
\begin{split}
[A_{\rm sys},B_{\rm sys}]=[I\otimes A, \Delta \otimes b]+[K\otimes b,...,K\otimes b]\\
{\bf diag}\{\Lambda_1,...,\Lambda_r\}[[K_I^{\intercal}\otimes c_1^{\intercal},...,K_I^{\intercal}\otimes c_r^{\intercal}]^{\intercal},0].
\end{split}
\end{equation}
Regarding the linear parameterization (\ref{linear_pa2}), direct algebraic manipulations show that the associated TFMs are respectively {\small{
\begin{equation} \label{g_zv} \begin{aligned}
{G_{{{zv}}}}(\lambda ) &= \left[ {\begin{array}{*{20}{c}}
{{K_I} \otimes {c_1}}\\
 \vdots \\
{{K_I} \otimes {c_r}}
\end{array}} \right]{(\lambda I - I \otimes A)^{ - 1}}[K \otimes b,\cdots, K \otimes b]\\
 &= {\bf 1}_{1\times r}\otimes \left[ {\begin{array}{*{20}{c}}
{({K_I}K) \otimes [{c_1}{{(\lambda I - A)}^{ - 1}}b)]}\\
 \vdots \\
{({K_I}K) \otimes [{c_r}{{(\lambda I - A)}^{ - 1}}b)]}
\end{array}} \right], \\
{G_{{{zu}}}}(\lambda) &=\left[
    \begin{array}{c}
      (K_I\Delta)\otimes (c_1(\lambda I-A)^{-1}b) \\
      \vdots \\
      (K_I\Delta)\otimes (c_r(\lambda I-A)^{-1}b)
    \end{array}
  \right].
\end{aligned}\end{equation}}}To proceed with our proof, we need the following lemma.
\begin{lemma}[Lemma 8 of \cite{zhang2019Structural_arxiv}] \label{lemma_equal_graph}
Given four matrices $H,P,G$ and $\Lambda$, suppose the following conditions hold: 1) $H, P$ and $G$ are of the dimensions $k\times n$, $k\times m$ and $n\times k$ respectively; 2) Whenever there exists one $l\in \{1,...,k\}$ such that $G_{il}\ne 0$ and $H_{li}\ne 0$ (resp. $G_{il}\ne 0$ and $P_{li}\ne 0$), it implies that $[GH]_{ij}\ne 0$ (resp. $[GP]_{ij}\ne 0$); 3) $\Lambda$ is an $n\times n$ diagonal matrix whose diagonal entries are free parameters. Then, every cycle is input-reachable in ${\cal G}_{\rm aux}(GH,GP)$, if and only if such property holds in ${\cal G}_{\rm aux}(H\Lambda G,P)$.
\end{lemma}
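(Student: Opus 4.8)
The plan is to realise both auxiliary digraphs in the statement as ``edge contractions'' of a single bipartite digraph, and then to argue that the cycle/input-reachability structure of all the digraphs involved is controlled by the acyclicity of one and the same induced subgraph. First I would introduce the digraph ${\cal B}$ on the vertex set $\{v_1,\dots,v_n\}\cup\{w_1,\dots,w_k\}\cup\{u_1,\dots,u_m\}$, with an arc $v_i\to w_l$ whenever $H_{li}\ne 0$, an arc $w_l\to v_j$ whenever $G_{jl}\ne 0$, and an arc $u_i\to w_l$ whenever $P_{li}\ne 0$ (so the $u$-vertices are sources). Condition~2) says exactly that no cancellation occurs in forming the products $GH$ and $GP$, so $[GH]_{ji}$ (resp.\ $[GP]_{ji}$) is structurally nonzero precisely when there is an $l$ with $G_{jl}\ne 0$ and $H_{li}\ne 0$ (resp.\ $G_{jl}\ne 0$ and $P_{li}\ne 0$); hence $v_i\to v_j$ is an arc of ${\cal G}_{\rm aux}(GH,GP)$ iff ${\cal B}$ has a length-two path $v_i\to w_l\to v_j$, and $u_i\to v_j$ is an arc iff ${\cal B}$ has $u_i\to w_l\to v_j$. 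In other words, ${\cal G}_{\rm aux}(GH,GP)$ is obtained from ${\cal B}$ by deleting the $w$-layer and inserting an arc $x\to y$ whenever $x\to w_l\to y$ in ${\cal B}$. Symmetrically, since the diagonal entries of $\Lambda$ are free parameters not occurring in $H,G,P$, the entry $[H\Lambda G]_{l'l}=\sum_j H_{l'j}\Lambda_{jj}G_{jl}$ is not identically zero iff $H_{l'j}\ne 0$ and $G_{jl}\ne 0$ for some $j$ (distinct $j$'s carry algebraically independent coefficients $\Lambda_{jj}$, so no cancellation is possible), and thus ${\cal G}_{\rm aux}(H\Lambda G,P)$ is obtained from ${\cal B}$ in the same way but deleting the $v$-layer instead. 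A routine lifting/projection of paths then shows that a $v$-vertex (resp.\ $w$-vertex) is input-reachable in ${\cal G}_{\rm aux}(GH,GP)$ (resp.\ in ${\cal G}_{\rm aux}(H\Lambda G,P)$) if and only if it is input-reachable in ${\cal B}$.

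Next I would invoke the standard reformulation that, in a digraph, every cycle is input-reachable if and only if the subgraph induced on the non-input-reachable vertices is acyclic. Let $N_v$ and $N_w$ be the sets of non-input-reachable $v$- and $w$-vertices of ${\cal B}$, and set ${\cal B}'={\cal B}[N_v\cup N_w]$, a bipartite digraph all of whose arcs run between the two layers. The key claim is that the restriction of ${\cal G}_{\rm aux}(GH,GP)$ to $N_v$ coincides with the digraph obtained from ${\cal B}'$ by deleting the $w$-layer (contracting length-two $v\to w\to v$ paths), and the restriction of ${\cal G}_{\rm aux}(H\Lambda G,P)$ to $N_w$ coincides with the digraph obtained from ${\cal B}'$ by deleting the $v$-layer. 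The one point that is not automatic here---and the step I expect to be the main obstacle---is that the intermediate vertex of a contracted arc between two non-input-reachable vertices is itself non-input-reachable: if $v_i\to w_l\to v_j$ in ${\cal B}$ with $v_i,v_j\in N_v$, then $w_l$ cannot be input-reachable, for otherwise the arc $w_l\to v_j$ would make $v_j$ input-reachable; and symmetrically, if $w_l\to v_j\to w_{l'}$ with $w_l,w_{l'}\in N_w$, then $v_j$ cannot be input-reachable, since the arc $v_j\to w_{l'}$ would make $w_{l'}$ input-reachable.

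Finally, since ${\cal B}'$ is bipartite, every directed cycle in it alternates between the two layers, so it visits equally many $v$- and $w$-vertices and projects to a directed cycle on the $v$-side of its ``delete-$w$'' contraction and to a directed cycle on the $w$-side of its ``delete-$v$'' contraction; conversely, any directed cycle in either contraction lifts arc-by-arc (inserting the witnessing intermediate vertex) to a closed walk in ${\cal B}'$, which contains a directed cycle. Hence ${\cal B}'$ is acyclic iff its ``delete-$w$'' contraction is acyclic iff its ``delete-$v$'' contraction is acyclic. By the identifications of the previous paragraph together with the reformulation of input-reachability of cycles as acyclicity of the non-input-reachable part, this chain of equivalences is precisely the statement that every cycle of ${\cal G}_{\rm aux}(GH,GP)$ is input-reachable iff every cycle of ${\cal G}_{\rm aux}(H\Lambda G,P)$ is, which is the lemma. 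Apart from the bookkeeping flagged above, all the steps are elementary graph surgery.
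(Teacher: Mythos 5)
The paper does not actually prove this lemma; it imports it verbatim as ``Lemma 8 of \cite{zhang2019Structural_arxiv}'' and uses it as a black box, so there is no in-paper proof to compare against. Your argument is, as far as I can check, a correct and self-contained proof. The three load-bearing observations are all present and justified: (i) the structural identifications $[GH]_{ji}\ne 0 \Leftrightarrow \exists l:\,G_{jl}\ne 0,\ H_{li}\ne 0$ and $[GP]_{ji}\ne 0 \Leftrightarrow \exists l:\,G_{jl}\ne 0,\ P_{li}\ne 0$ (these use Condition~2, which is exactly a no-cancellation hypothesis, modulo an index typo in the lemma statement itself), together with $[H\Lambda G]_{l'l}=\sum_j H_{l'j}\Lambda_{jj}G_{jl}$ being generically nonzero iff some term is nonzero (this uses the algebraic independence of the $\Lambda_{jj}$, and is where Condition~3 enters -- note that without $\Lambda$ the product $HG$ could suffer cancellation, which is why the lemma is not symmetric in its two sides); (ii) the reformulation ``every cycle is input-reachable $\Leftrightarrow$ the subgraph induced on non-input-reachable vertices is acyclic,'' which is valid because the input-reachable set is closed under out-arcs, so a cycle with one reachable vertex has all vertices reachable; and (iii) the only genuinely non-automatic step, namely that the witness vertex of a contracted arc between two non-reachable vertices is itself non-reachable -- you correctly isolate and dispose of this via the out-closure of the reachable set. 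The final bipartite-cycle projection/lifting argument (including the degenerate case of $2$-cycles in ${\cal B}'$ corresponding to self-loops in the contractions) goes through. In short: nothing to fix; your proof fills a gap the paper leaves to an external reference.
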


\begin{proposition} \label{reachable} If $(A,b)$ is controllable and the network topology is globally input-reachable, then every cycle of ${\cal G}_{\rm aux}(G_{zv}(\lambda),G_{zu}(\lambda))$ is input-reachable.
\end{proposition}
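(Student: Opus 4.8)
The plan is to use Lemma \ref{lemma_equal_graph} to transfer the cycle--input-reachability question from the ``product'' form of the TFMs in (\ref{linear_pa2})--(\ref{g_zv}) back to the simpler ``factored'' auxiliary digraph built directly out of $K_I$, $K$, $\Delta$ and the scalar transfer function $c_k(\lambda I-A)^{-1}b$. Concretely, in the notation of Lemma \ref{lemma_equal_graph}, I would take $H = {\bf col}\{K_I\otimes c_k|_{k=1}^r\}$, $P = (K_I\Delta)$-type block, $G = [K\otimes b,\cdots,K\otimes b]$ and $\Lambda = {\bf diag}\{\Lambda_1,\ldots,\Lambda_r\}$, so that $H\Lambda G$ and $P$ reproduce exactly the parameterization (\ref{linear_pa2}) of $[A_{\rm sys},B_{\rm sys}]$, while $GH$ and $GP$ are (up to the ${\bf 1}_{1\times r}$ Kronecker factors) the TFMs $G_{zv}(\lambda)$, $G_{zu}(\lambda)$ of (\ref{g_zv}). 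The first step is therefore to verify the three hypotheses of Lemma \ref{lemma_equal_graph}, the only nontrivial one being hypothesis 2): I must check that whenever a ``bridging'' entry $G_{il}\ne 0$ meets $H_{li}\ne 0$ (or $P_{li}\ne 0$), the corresponding entry of the product $GH$ (or $GP$) is not identically zero. This is where controllability of $(A,b)$ enters — it guarantees that the scalar transfer functions $c_k(\lambda I-A)^{-1}b$ are not identically zero (as $c_k\ne 0$ by standing assumption together with observability-type considerations), so that no cancellation forces a product entry to vanish; I also need to note $K(K_IK)$, $K_IK$ have the appropriate support and that the Kronecker structure does not introduce cancellations, which follows since $b\ne 0$.

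Having reduced matters via Lemma \ref{lemma_equal_graph}, the second step is to analyze the auxiliary digraph ${\cal G}_{\rm aux}(GH,GP)$ directly. Here $GH$ has, up to nonzero transfer-function scalars, the block structure $({\bf 1}\otimes)\,(K_IK)\otimes(\cdot)$, and $K_IK = -K_I K_I^{\intercal}$ is (minus) the edge-Laplacian-type matrix of ${\cal G}_{\rm sys}$; its zero/nonzero pattern records which edges of ${\cal G}_{\rm sys}$ share an endpoint. I would observe that, modulo the Kronecker factor indexed by $k\in\{1,\ldots,r\}$ and by the state coordinate inside each subsystem, ${\cal G}_{\rm aux}(GH,GP)$ is essentially the line-graph-style auxiliary digraph of the input-augmented topology $\bar{\cal G}_{\rm sys}$, with inputs injected at the edges incident to controlled vertices (this is exactly the content of $K_I\Delta$). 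Global input-reachability of $\bar{\cal G}_{\rm sys}$ then translates into: every edge of ${\cal G}_{\rm sys}$ lies on a path, in this edge-based digraph, from some edge incident to ${\cal I}_u$; hence every vertex of ${\cal G}_{\rm aux}(GH,GP)$ is input-reachable, and a fortiori every cycle is input-reachable. It may be cleaner to argue the contrapositive: if some cycle of ${\cal G}_{\rm aux}(GH,GP)$ were not input-reachable, I could pull back the vertex set of that cycle to a nonempty set of edges of ${\cal G}_{\rm sys}$ closed under the ``adjacent edge'' relation and unreachable from ${\cal I}_u$, which would exhibit a vertex of ${\cal G}_{\rm sys}$ with no path from $\cal U$ — contradicting global input-reachability. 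The combinatorial bookkeeping relating incidence-matrix supports, Kronecker products, and paths in $\bar{\cal G}_{\rm sys}$ is the part that needs care.

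I expect the main obstacle to be precisely this translation in the second step: making rigorous the correspondence between walks in the edge-indexed auxiliary digraph ${\cal G}_{\rm aux}(GH,GP)$ and walks in the vertex-indexed augmented graph $\bar{\cal G}_{\rm sys}$. The subtlety is that $K_IK$ encodes adjacency \emph{between edges} (sharing a vertex), not adjacency between vertices, so one must show that an infinite backward walk avoiding all input edges, if it existed, would yield a connected component of ${\cal G}_{\rm sys}$ disjoint from ${\cal I}_u$; the orientation chosen arbitrarily for the incidence matrix must be shown irrelevant here (since nonzero patterns, not signs, are what matter for ${\cal G}_{\rm aux}$, and the undirected structure makes $K_I$ and $K=-K_I^{\intercal}$ have transpose-related supports). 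A secondary, more routine obstacle is handling the Kronecker factors $I\otimes A$, ${\bf 1}_{1\times r}$, and the internal state dimension $n$: one should check that input-reachability in ${\cal G}_{\rm aux}$ of the ``macroscopic'' vertex (edge of ${\cal G}_{\rm sys}$) forces input-reachability of all $n$ copies of it, which again uses controllability of $(A,b)$ to guarantee the relevant transfer-function entries do not vanish. Once these support computations are pinned down, the conclusion that every cycle is input-reachable is immediate.
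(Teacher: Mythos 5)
Your overall strategy is the paper's: first pin down the sparsity pattern of $[G_{zv}(\lambda),G_{zu}(\lambda)]$ using controllability of $(A,b)$ together with $c_k\ne 0$ (so that $c_k(\lambda I-A)^{-1}b\not\equiv 0$), and then transfer the cycle--input-reachability question from the edge-indexed, line-graph-style auxiliary digraph down to the vertex-level graph $\bar{\cal G}_{\rm sys}$, where global input-reachability applies. Your second step, argued directly (consecutive edges of a path in $\bar{\cal G}_{\rm sys}$ share a vertex, hence give arcs via the support of $K_IK$, and edges incident to ${\cal I}_u$ give the nonzero rows of $K_I\Delta$), is sound and is exactly what the paper's invocation of Lemma~\ref{lemma_equal_graph} accomplishes. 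One small correction there: the auxiliary digraph of the TFMs has only $r|{\cal E}_{\rm sys}|$ state vertices plus input vertices --- the internal dimension $n$ has already been collapsed by the transfer functions, so there are no ``$n$ copies'' of a vertex to worry about.

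However, your concrete instantiation of Lemma~\ref{lemma_equal_graph} does not work and should be repaired. You cannot choose $H$, $G$, $\Lambda$ so that $H\Lambda G$ reproduces the parameterization (\ref{linear_pa2}) while $GH$, $GP$ give the TFMs (\ref{g_zv}): the TFMs contain the resolvent $(\lambda I-I\otimes A)^{-1}$ sandwiched between the two factors, and no product of the constant factors of (\ref{linear_pa2}) produces it (with your assignment, $GH=\sum_k (KK_I)\otimes(bc_k)$, whose entries involve the scalars $c_kb$, which may well vanish --- e.g.\ $c_1b=0$ in the mass--spring example); moreover your $\Lambda={\bf diag}\{\Lambda_1,\ldots,\Lambda_r\}$ has size $r|{\cal E}_{\rm sys}|$ whereas the lemma requires $\Lambda$ to be square of the \emph{inner} dimension $Nn$. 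The passage from the parameterization to the TFMs is the content of Lemma~\ref{Theorem of Linear Parameterization}, not of Lemma~\ref{lemma_equal_graph}. The correct use of Lemma~\ref{lemma_equal_graph} is purely combinatorial and comes \emph{after} the sparsity pattern $[{\bf 1}_{r\times r}\otimes(K_IK),\,{\bf 1}_{r\times 1}\otimes(K_I\Delta)]$ of the TFMs has been established: take $G={\bf diag}\{K_I|_{i=1}^r\}$, $H={\bf 1}_{r\times r}\otimes K$, $P={\bf 1}_{r\times 1}\otimes\Delta$, so that $[GH,GP]$ equals that pattern while $[H\Lambda G,P]$ has the pattern of $[-{\bf 1}_{r\times r}\otimes L,\,{\bf 1}_{r\times 1}\otimes\Delta]$ with $L$ a Laplacian of ${\cal G}_{\rm sys}$; global input-reachability of $\bar{\cal G}_{\rm sys}$ then gives input-reachability of every vertex of the latter digraph. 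If you prefer to skip the lemma entirely, your direct line-graph argument suffices, but the version of step one you wrote down would not compile into a proof as stated.
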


\begin{proof}
Since $(A,b)$ is controllable, $(I,A,b)$ is output-controllable (see \citep[Section 9.6]{Modern_Control_Ogata}). According to \cite{Modern_Control_Ogata}, this requires that the rows of $(\lambda I-A)^{-1}b$ are linearly independent in the field of complex values. That is, there cannot exist $x\in {\mathbb C}^{n}\backslash \{0\}$ making $x^{\intercal}(\lambda I-A)^{-1}b\equiv 0$.  As $c_i\ne 0$, $c_i(\lambda I-A)^{-1}b\ne 0$. Hence, $[G_{zv}(\lambda), G_{zu}(\lambda)]$ has the same sparsity pattern~as \begin{equation}\label{global} [{\bf 1}_{r\times r}\otimes (K_IK), {\bf 1}_{r\times 1}\otimes (K_I\Delta)].\end{equation}
Let $\widetilde {\cal G}_{\rm {aux}}$ be the auxiliary digraph associated with (\ref{global}).
Define matrices $G\doteq {\bf diag}\{K_I|_{i=1}^r\}$, $H\doteq {\bf 1}_{r\times r}\otimes K$, $P\doteq {\bf 1}_{r\times 1}\otimes \Delta$. Then,
 (\ref{global}) can be expressed as $[GH,GP]$.
From the construction of $K_I$ and $K$, one has that
\[{[K_IK]_{ij}} = \left\{ \begin{array}{l}
 {[K_I]_{il_1}K_{l_1i}+[K_I]_{il_2}K_{l_2i}=-2}, {i = j, e_{i}=(l_1,l_2)}\\
 {[K_I]_{il_1}K_{l_1j}=-1}, {i\ne j, l_1= V(e_i)\cap V(e_j)\ne \emptyset}\\
0,{i \ne j, V(e_i)\cap V(e_j)= \emptyset}
\end{array}, \right.\]where $V(\bullet)$ denotes the vertex set.
Hence, it can be validated that, Condition 2) of Lemma \ref{lemma_equal_graph} holds with respect to $(G,H,P)$.

Let $L$ be a Laplacian matrix associated with ${\cal G}_{\rm sys}$.
Using Lemma \ref{lemma_equal_graph} on $(G,H,P)$,  one will obtain the following matrix
\[[({\bf 1}_{r\times r}\otimes K){\bf diag}\{\Lambda_iK_I|_{i=1}^r\}, {\bf 1}_{r\times 1}\otimes \Delta]\]
which has the same sparsity pattern as \begin{equation} \label{same_sparsity} [-({\bf 1}_{r\times r}\otimes L), {\bf 1}_{r\times 1}\otimes \Delta] ,\end{equation}where $\Lambda_i$ is defined just before (\ref{linear_pa2}).
Denote by $\hat {\cal G}_{\rm {aux}}$ the auxiliary digraph associated with (\ref{same_sparsity}).  According to Lemma \ref{lemma_equal_graph}, every cycle in $\widetilde {\cal G}_{\rm {aux}}$ is input-reachable, if and only if such property holds in $\hat {\cal G}_{\rm {aux}}$.  By zeroing out the off-diagonal blocks of the left submatrix of (\ref{same_sparsity}), noting that $\bar {\cal G}_{\rm sys}$ is the auxiliary digraph associated with $[-L, \Delta]$ by eliminating all self-loops,  the global reachability of $\bar {\cal G}_{\rm sys}$ indicates
that, every vertex is input-reachable in $\hat {\cal G}_{\rm aux}$. By Lemma \ref{lemma_equal_graph}, the proposed proposition is proved.
\end{proof}


\begin{proposition}\label{rank_proposition}
Consider the networked system (\ref{sub_dynamic})-(\ref{sub_interaction}). If $\bar {\cal G}_{\rm sys}$ is globally input-reachable, whiles $(A,b)$ is controllable and $(A,[c_1^{\intercal},...,c_r^{\intercal}]^{\intercal})$ is observable. Then, for each $\lambda_i\in \sigma(A)$, ${\rm grank}[\lambda_i I- A_{\rm sys}, B_{\rm sys}]= Nn$.
\end{proposition}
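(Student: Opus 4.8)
The plan is to prove that, for generic values of the symmetric vector weights $\{W_{jl}\}_{(j,l)\in{\cal E}_{\rm sys}}$, the $Nn\times(Nn+N)$ matrix $[\lambda_i I-A_{\rm sys},B_{\rm sys}]$ has no nonzero left annihilator; since attaining the maximal row rank is a generic property, this immediately gives ${\rm grank}[\lambda_i I-A_{\rm sys},B_{\rm sys}]=Nn$. Fix $\lambda\in\sigma(A)$ and suppose $z^\intercal[\lambda I-A_{\rm sys},B_{\rm sys}]=0$, where $z=[z_1^\intercal,\dots,z_N^\intercal]^\intercal$ with $z_j\in{\mathbb C}^n$. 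Using $\lambda I-A_{\rm sys}=I_N\otimes(\lambda I-A)+\sum_{k=1}^r L_k\otimes(bc_k)$ and $B_{\rm sys}=\Delta\otimes b$ from (\ref{lump_pp}), introduce the scalars $\beta_j\doteq z_j^\intercal b$; then, reading off the $j$th $n$-block and using $W_{jl}=W_{lj}$ together with $[L_k\beta]_j=\sum\nolimits_{l\ne j}w^{[k]}_{jl}(\beta_j-\beta_l)$, the equation $z^\intercal[\lambda I-A_{\rm sys},B_{\rm sys}]=0$ becomes equivalent to
\[ z_j^\intercal(\lambda I-A)+\gamma_j C=0,\quad\delta_j\beta_j=0,\quad \gamma_j\doteq\sum\nolimits_{l\ne j}(\beta_j-\beta_l)W_{jl},\quad j=1,\dots,N. \]

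The key step is to analyze the first family of equations through the eigenstructure of $A$ at $\lambda$. Let $v\in{\mathbb C}^n\setminus\{0\}$ be a right eigenvector, $(\lambda I-A)v=0$. Right-multiplying the $j$th equation by $v$ yields $\gamma_j Cv=0$ for every $j$. By the PBH observability test for $(A,[c_1^\intercal,\dots,c_r^\intercal]^\intercal)$, the vector $q\doteq Cv$ is nonzero. Setting $\omega_{jl}\doteq W_{jl}q\in{\mathbb C}$ we have $\omega_{jl}=\omega_{lj}$, and for generic $\{W_{jl}\}$ each $\omega_{jl}$ with $(j,l)\in{\cal E}_{\rm sys}$ is nonzero and the family $\{\omega_{jl}\}$ is algebraically independent, so $L_\omega$, the scalar Laplacian of ${\cal G}_{\rm sys}$ with edge weights $\omega_{jl}$, is a generic weighted Laplacian. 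The relations $\gamma_j Cv=0$ are exactly $L_\omega\,\beta=0$ with $\beta=[\beta_1,\dots,\beta_N]^\intercal$, so $\beta$ must be constant on each connected component of ${\cal G}_{\rm sys}$.

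It remains to combine this with the constraints $\delta_j\beta_j=0$. Because ${\cal G}_{\rm sys}$ is undirected, global input-reachability of $\bar{\cal G}_{\rm sys}$ holds if and only if every connected component of ${\cal G}_{\rm sys}$ contains some index in ${\cal I}_u$; hence $\beta$, being component-wise constant and vanishing at every such index, must be $0$. Then $\gamma_j=0$ for all $j$, so the $j$th equation reduces to $z_j^\intercal(\lambda I-A)=0$. Since $(A,b)$ is controllable, $\lambda I-A$ has nullity one (PBH), so $z_j^\intercal=\mu_j w^\intercal$ for the left eigenvector $w$; from $\beta_j=z_j^\intercal b=\mu_j(w^\intercal b)$, $\beta_j=0$, and $w^\intercal b\ne 0$ (PBH controllability) we get $\mu_j=0$, i.e., $z=0$, which proves the claim.

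The step I expect to require the most care is the genericity argument in the middle: one must justify treating the scalars $\omega_{jl}=W_{jl}q$ as independent free parameters even though $q$ is fixed by $A,C,\lambda$ and may be complex, and one must invoke that a generic weighted Laplacian of an undirected graph has kernel spanned precisely by the indicator vectors of its connected components. I also note that Proposition \ref{rank_proposition} is exactly Condition 2) of Lemma \ref{Theorem of Linear Parameterization} applied to the parameterization (\ref{linear_pa2}) (with $A_0=I_N\otimes A$, so $\sigma(A_0)=\sigma(A)$), so together with Proposition \ref{reachable} it yields the sufficiency part of Theorem \ref{theorem_simo}.
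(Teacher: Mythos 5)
Your proof is correct, but it takes a genuinely different route from the paper's. The paper diagonalizes the Laplacians through the incidence matrix, restricts to a spanning forest rooted at $\cal U$, and runs an induction that appends one leaf vertex at a time: a Schur-complement step reduces each inductive step to the full generic rank of the bordered pencil $\left[\begin{smallmatrix}A-\lambda_0I & -b\Lambda\\ -C & I_r-Q_0\Lambda\end{smallmatrix}\right]$, which is the content of Lemma \ref{important} (proved by a determinant expansion in the reciprocals $s_k^{-1}$). You instead apply the PBH left-kernel test directly to the lumped pair: the symmetry $W_{jl}=W_{lj}$ collapses the kernel equations to $z_j^\intercal(\lambda I-A)+\gamma_jC=0$, projection onto a right eigenvector $v$ turns these into the scalar Laplacian equation $L_\omega\beta=0$ with $\omega_{jl}=W_{jl}Cv$, and the generic kernel of $L_\omega$ is read off from the matrix-tree theorem. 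This is shorter, avoids the incidence-matrix bookkeeping, and makes visible exactly where each hypothesis enters: observability gives $Cv\ne 0$, input-reachability of the undirected topology means every component meets ${\cal I}_u$, and single-input controllability gives left-nullity one of $\lambda I-A$ with $w^\intercal b\ne 0$. On the genericity step you flag: full algebraic independence of the $\omega_{jl}$ is not needed; it suffices that the grounded-Laplacian determinant $\sum_F\prod_{e\in F}W_eq$ (sum over spanning forests rooted at the grounded vertices) is not identically zero as a polynomial in the real entries of the $W_e$, which holds because distinct forests contribute monomials with distinct edge supports and each factor $W_eq$ is a nonzero linear form since $q=Cv\ne 0$. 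The price of your argument is that it leans on the single-input structure and on projecting onto a single eigenvector, so it does not transfer as directly as the paper's tree induction to the MIMO, matrix-weighted setting of Proposition \ref{pro_2}, where the fixed-mode condition replaces controllability plus observability and the paper reuses its inductive scheme essentially verbatim.
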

\vspace*{-0.2em}
To prove the above proposition, we need the following lemma.
\vspace*{-0.1em}
\begin{lemma} \label{important}
Given matrices $A\in {\mathbb R}^{n\times n}$, $b\in {\mathbb R}^{n}$, $C\in {\mathbb R}^{r\times n}$, suppose that $(A,b)$ is controllable and $(A,C)$ is observable. Let $\Lambda=[s_1,\cdots,s_r]$ be a $1\times r$ matrix whose entries are all free parameters $s_i|_{i=1}^r$. Then, for arbitrary $Q_0\in {\mathbb C}^{r\times 1}$,
$$
{{\rm grank}\left[
             \begin{array}{cc}
               \lambda_iI-A & b\Lambda \\
               -C & I_r-Q_0\Lambda \\
             \end{array}
           \right]=n+r}
$$
holds for each $\lambda_i\in \sigma(A)$.
\end{lemma}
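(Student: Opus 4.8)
The plan is to analyze the $(n+r)\times(n+r+1)$ pencil by column/row operations that exploit controllability of $(A,b)$ and observability of $(A,C)$ separately, since these are the only structural resources available and the free parameters $s_1,\dots,s_r$ appear only in the last column block $b\Lambda$ and in $-Q_0\Lambda$. First I would fix $\lambda_i\in\sigma(A)$ and argue that it suffices to exhibit one numerical choice of $s_1,\dots,s_r$ for which the matrix has full row rank $n+r$; genericity of rank (i.e. grank being attained for almost all parameter values) then gives the claim. A natural first reduction: since $(A,b)$ is controllable, $(\lambda_i I-A)$ has rank $n-1$ (as $\lambda_i$ is a simple-geometric-multiplicity eigenvalue by controllability), and its left kernel is spanned by a single row vector $p^{\intercal}$ with $p^{\intercal}(\lambda_i I-A)=0$ and $p^{\intercal}b\neq 0$. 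So the only obstruction to the top block $[\lambda_i I-A,\ b\Lambda]$ having full row rank $n$ would be $p^{\intercal}b\Lambda=0$, which fails for generic $\Lambda$ because $p^{\intercal}b\neq0$; hence the top $n$ rows already have grank $n$.

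The remaining work is to show the bottom $r$ rows $[-C,\ I_r-Q_0\Lambda]$ contribute $r$ more independent rows modulo the row space of the top block. Here is where observability of $(A,C)$ must enter. I would use a left-null-vector test: a linear combination $y^{\intercal}[\text{bottom block}] + z^{\intercal}[\text{top block}]=0$ forces $z^{\intercal}(\lambda_i I-A) - y^{\intercal}C = 0$ and $z^{\intercal}b\Lambda + y^{\intercal}(I_r - Q_0\Lambda)=0$. The first equation says $z^{\intercal}(\lambda_i I - A) = y^{\intercal}C$; combined with the PBH observability condition that $\mathrm{rank}\,\mathrm{col}(\lambda_i I - A,\ C)=n$, i.e. there is no nonzero $x$ with $(\lambda_i I-A)x=0$ and $Cx=0$ — equivalently, the only solution structure is controlled — one shows $z$ and $y$ are constrained. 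Pair this with the second equation, evaluated at a suitably chosen generic $\Lambda$ (e.g. the $s_j$ chosen so that $y^{\intercal}Q_0\Lambda$ and $z^{\intercal}b\Lambda$ cannot conspire to cancel $y^{\intercal}$ unless $y=0$), to conclude $y=0$, and then $z^{\intercal}(\lambda_i I-A)=0$ with $z^{\intercal}b\Lambda=0$ forces $z=0$. That yields full row rank for the generic parameter choice, hence grank $=n+r$.

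The main obstacle I anticipate is handling the coupling between the two parameter-dependent blocks $b\Lambda$ and $Q_0\Lambda$ cleanly: both involve the same row vector $\Lambda=[s_1,\dots,s_r]$, so one cannot choose the parameters independently for the two constraints, and $Q_0$ is an arbitrary (possibly adversarial) complex column vector. The trick will be to observe that $b\Lambda$ is rank one with column space $\mathrm{span}\{b\}$ and $Q_0\Lambda$ is rank one with column space $\mathrm{span}\{Q_0\}$, so the parameter vector $\Lambda$ enters only through the scalar functionals $\Lambda\mapsto \Lambda v$ for various fixed $v$; the genericity argument then reduces to showing a finite set of polynomial (in fact affine-linear in the $s_j$) conditions are not all identically zero, which follows from $p^{\intercal}b\neq0$ together with observability. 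I would also double-check the degenerate case $\lambda_i$ having algebraic multiplicity $>1$ in $A$: controllability of $(A,b)$ still forces geometric multiplicity one, so $\mathrm{rank}(\lambda_i I-A)=n-1$ holds regardless, and the single-left-null-vector argument goes through unchanged.
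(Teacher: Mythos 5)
Your route is genuinely different from the paper's. The paper right-multiplies by ${\bf diag}\{I_n,S\}$ with $S={\bf diag}\{s_i|_{i=1}^r\}$ to turn the problem into full column rank of a matrix in the inverted parameters $\bar s_i=s_i^{-1}$, reduces via a null-space basis ${\bf diag}\{x_i,\Gamma\}$ of $[\lambda_iI-A,\ {\bf 1}_{1\times r}\otimes b]$ to an $r\times r$ determinant, and extracts the monomial $\prod_{k\ne j}\bar s_k$ with coefficient $[Cx_i]_j$. Your direct left-null-vector (PBH-style) test on the original matrix is more elementary and, once completed, arguably cleaner: it absorbs the arbitrary $Q_0$ automatically, whereas the paper needs a degree argument to rule out interference from $M_0$. (Minor slip: the matrix is square, $(n+r)\times(n+r)$, not $(n+r)\times(n+r+1)$; full row rank here just means nonsingularity. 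Your side remarks on geometric multiplicity one and on $p^{\intercal}b\ne0$ are correct.)

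The gap is that the decisive step, concluding $y=0$, is asserted rather than proved: ``the $s_j$ chosen so that $y^{\intercal}Q_0\Lambda$ and $z^{\intercal}b\Lambda$ cannot conspire to cancel $y^{\intercal}$ unless $y=0$'' is precisely the claim that needs an argument, and ``the only solution structure is controlled'' is not one. Your own rank-one observation supplies the missing piece if pushed one step further. The second block equation is $z^{\intercal}b\Lambda+y^{\intercal}-y^{\intercal}Q_0\Lambda=0$, which says $y^{\intercal}=\alpha\Lambda$ for the \emph{scalar} $\alpha:=y^{\intercal}Q_0-z^{\intercal}b$; this is the structural fact that makes everything work. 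Substituting into the first block equation gives $z^{\intercal}(\lambda_iI-A)=\alpha\,\Lambda C$. Since $b$ is a single column and $(A,b)$ is controllable, $\ker(\lambda_iI-A)={\rm span}\{x_i\}$ is one-dimensional, so the row space of $\lambda_iI-A$ is exactly the annihilator of $x_i$, whence $\alpha\,\Lambda Cx_i=0$. Observability gives $Cx_i\ne0$, so choosing any $\Lambda$ with $\Lambda Cx_i\ne0$ (e.g.\ $\Lambda=(e_j^{[r]})^{\intercal}$ with $[Cx_i]_j\ne0$) forces $\alpha=0$; then $y^{\intercal}=\alpha\Lambda=0$, and $z^{\intercal}b=y^{\intercal}Q_0-\alpha=0$ together with $z^{\intercal}(\lambda_iI-A)=0$ yields $z=0$ by PBH controllability. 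With this inserted, your proof is complete; note that observability enters only through $Cx_i\ne0$, exactly as in the paper's argument.
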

\begin{proof}
Define $S\doteq {\bf diag}\{s_i|_{i=1}^r\}$ and $M_0=-{\bf 1}_{1\times r}\otimes Q_0$. Let $\bar s_i\doteq s_i^{-1}$, $i=1,...,r$. We then have
$$
\left[\!
             \begin{array}{cc}
               \lambda_iI-A & b\Lambda \\
               -C & I_r-Q_0\Lambda \\
             \end{array}
           \!\right]\!=\! \underbrace{\left[\begin{array}{cc}
               \lambda_iI-A & {\bf 1}_{1\times r}\otimes b  \\
               -C & S^{-1}+M_0 \\
             \end{array}\right]}_{\doteq F(\lambda_i)}\!\left[\!
                                                        \begin{array}{cc}
                                                          I_n &  \\
                                                           & S\\
                                                        \end{array}
                                                      \!\right]\!.
$$It follows that, if $F(\lambda_i)$ is full column generic rank for each $\lambda_i\in \sigma(A)$, then the proposed statement is proved. Define matrix
\begin{equation}\label{gamma}{\small \Gamma=\left[
          \begin{array}{cccc}
            1 & 0  & \cdots & 0  \\
            -1 & 1 & \cdots  & 0  \\
            \vdots &   & \ddots &  \\

            0 &   & \cdots &  1 \\
            0 &   &   \cdots & -1 \\
          \end{array}
        \right]\in {\mathbb R}^{r\times (r-1)}.}
\end{equation}
As $(A,b)$ is controllable, it can be validated that, for any $\lambda_i\in \sigma(A)$, a basis matrix spanning the null space of $[\lambda_i I-A, {\bf 1}_{1\times r}\otimes b]$ can be
${\bf diag}\{x_i,\Gamma\}$, where $x_i$ is the eigenvector of $A$ associated with $\lambda_i$. Notice that $x_i$ is unique if scaled by certain scalars.
Then, by the property of null space (see \cite{R.A.1991Topics}),  $F(\lambda_i)$ is of full (column) generic rank, if and only if
$$[-C, S^{-1}+M_0]{\bf diag}\{x_i,\Gamma\}=[-Cx_i,S^{-1}\Gamma+M_0\Gamma]$$ is of full (column) generic rank.
As $(A,C)$ is observable, there exists at least one $j\in\{1,...,r\}$, such that $[Cx_i]_j\ne 0$. Consider $\det [-Cx_i,S^{-1}\Gamma+M_0\Gamma]$. According to the structure specificity of $\Gamma$ in (\ref{gamma}), there exists one and only one nominal $\prod\nolimits_{k=1,k\ne j}^r \bar s_k$ in $\det [-Cx_i,S^{-1}\Gamma+M_0\Gamma]$, whose coefficient is $[Cx_i]_j$ (ignoring the corresponding sign). This can be validated by the definition of determinant, i.e., determinant of an $n\times n$ matrix is the sum of signed products of all $n$ entries, each chosen from different rows and different columns. Notice that $\prod\nolimits_{k=1,k\ne j}^{k=r}\bar s_k$ has the maximum degree $r-1$ such that arbitrary constant $M_0$ cannot violate the aforementioned fact.  Hence, $[-Cx_i,S^{-1}\Gamma+M_0\Gamma]$ is of full generic rank, proving Lemma \ref{important}.
\end{proof}
\vspace*{-0.3em}

{\bf{\emph{{Proof of Proposition \ref{rank_proposition}:}}}} We will use mathematical induction. Since $\bar {\cal G}_{\rm sys}$ is globally input-reachable, first assume that $\bar {\cal G}_{\rm {sys}}$ has one spanning tree rooted at $u_1$. Denote this spanning tree by $\cal T$, and vertices $u_1,1,...,N$ are in lexicographic order in the sense that the parent of vertex $i$ is among vertices $i-1,...,1,u_1$ in $\cal T$, $i=1,...,N$. Moreover, let ${\cal T}_i$ be the subgraph of $\cal T$ induced by vertices $1,...,i$, $i=1,...,N$, and $K_{Ii}$ be the incidence matrix associated with ${\cal T}_i$, which is defined similarly to $K_I$ of $\cal G_{\rm sys}$. In this sense, $K_{Ii}$ can be recursively constructed as
\begin{equation}\label{Krelation}K_{I,i+1}=\left[
  \begin{array}{cc}
    K_{Ii} & 0 \\
    (e^{[i]}_{(i+1)^*})^\intercal & -1 \\
  \end{array}
\right],\end{equation}
where $(i+1)^*$ is the parent of vertex $i+1$ in $\cal T$, $i=0,...,N-1$, and $K_{I0}$ is empty. Let $K_{i}=-K_{Ii}^{\intercal}$.  Let weights of edges not in $\cal T$ be zero. Remember that $W_{ii^*}$ is the vector-valued weight of the edge connecting vertex $i$ and its parent $i^*$.
Let $S_i={\bf diag}\{W_{jj^*}|_{j=2}^{i+1}\}$, $i=1,...,N-1$, and $S_0$ be empty, i.e., $S_i$ stores all  weights of edges in ${\cal T}_{i+1}$. Then, it can be validated that \begin{equation}\label{lumpA} A_{\rm sys}=I_N\otimes A-(K^{\intercal}_{IN}\otimes b)S_{N-1}(K_{IN}\otimes C)\end{equation}
We will prove by induction that, for each $\lambda_0\in {\sigma(A)}$
\begin{equation}\label{induction}[I_i\otimes A- (K^{\intercal}_{Ii}\otimes b)S_{i-1}(K_{Ii}\otimes C)-\lambda_0 I, e_1^{[i]}\otimes b]\end{equation}
is of full row generic rank for $i=1,...,N$. Since $(A,b)$ is controllable, the base case where $i=1$ is obviously true. Now suppose that (\ref{induction}) is of full row generic rank for some $i$ between $1$ and $N-1$.  Rewrite (\ref{induction}) as
\begin{equation}\begin{split}[I_i\otimes A- (K^{\intercal}_{Ii}\otimes b)S_{i-1}(K_{Ii}\otimes C)-\lambda_0 I, e_1^{[i]}\otimes b]\\
=[I_i\otimes A-\lambda_0 I, e_1^{[i]}\otimes b]+(K_{i}\otimes b)S_{i-1}[K_{Ii}\otimes C, 0]
\end{split}.
\end{equation}
Using Schur complement \cite{R.A.1991Topics} on the above formula, we have that (\ref{induction}) is of full row generic rank, if and only if
\[\Psi_i\doteq\begin{pmat}[{.|}]
   I_i\otimes A-\lambda_0 I & e_1^{[i]}\otimes b & (K_{Ii}^{\intercal}\otimes b)S_{i-1} \cr\-
    K_{Ii}\otimes C & 0 & I \cr
\end{pmat}\]is so. Substituting (\ref{Krelation}) into $\Psi_{i+1}$ and after some elementary permutations, $\Psi_{i+1}$ is of full row generic rank,  if and only if
\begin{equation}\label{bigmatrix}{\footnotesize\begin{pmat}[{..|.}]
   I_i\otimes A-\lambda_0 I & e_1^{[i]}\otimes b & (K^{\intercal}_{Ii}\otimes b)S_{i-1} & 0 & e^{[i]}_{(i+1)^*}\otimes b \Lambda_{i+1} \cr
    K_{Ii}\otimes C & 0 & I & 0 & 0 \cr\-
    0 & 0 & 0 & A-\lambda_0 I & -b\Lambda_{i+1} \cr
    (e^{[i]}_{(i+1)^*})^{\intercal}\otimes C & 0 & 0 & -C & I_r\cr
\end{pmat}}\end{equation} is of full row generic rank, where $\Lambda_{i+1}\doteq W_{(i+1)(i+1)^*}$ for notation simplicity.
Let $S_{i-1}$ take some value such that $\Psi_i$ is of full row rank. Then for arbitrary $\Lambda_{i+1}\in {\mathbb R}^{1\times r}$, {\footnotesize $\Psi_i\bar \Psi_i=\left[
                                                                                                        \begin{array}{c}
                                                                                                          e^{[i]}_{(i+1)^*}\otimes b\Lambda_{i+1} \\
                                                                                                          0 \\
                                                                                                        \end{array}
                                                                                                      \right]
 $}, where {\footnotesize$\bar \Psi_i\doteq \Psi_i^{\dag}\left[
                                                                                                        \begin{array}{c}
                                                                                                          e^{[i]}_{(i+1)^*}\otimes b\Lambda_{i+1} \\
                                                                                                          0 \\
                                                                                                        \end{array}
                                                                                                      \right]$}, and $(\bullet)^{\dag}$ denotes the Moore-Penrose inverse. Hence, post-multiplying  {\footnotesize $\small{\left[
                                                                                                        \begin{array}{ccc}
                                                                                                          I & 0 & - \bar \Psi_i\\
                                                                                                          0 & I_n & 0 \\
                                                                                                          0 & 0 & I_r \\
                                                                                                        \end{array}
                                                                                                      \right]}$} to (\ref{bigmatrix}), one will obtain  { $$\Pi \doteq \left[
                   \begin{array}{cc}
                     \Psi_i & 0 \\
                     \Pi_{21} & \Pi_{22} \\
                   \end{array}
                 \right],$$}where {\footnotesize $\Pi_{21}\doteq \left[
                                               \begin{array}{ccc}
                                                 0 & 0 & 0 \\
                                                (e^{[i]}_{(i+1)^*})^{\intercal}\otimes C  & 0 & 0 \\
                                               \end{array}
                                             \right]
                 $, $\Pi_{22}\doteq \left[
                                  \begin{array}{cc}
                                    A-\lambda_0 I & -b\Lambda_{i+1} \\
                                    -C &  I_r-Q_0\Lambda_{i+1}\\
                                  \end{array}
                                \right]
                 $}, with constant matrix $Q_0\in {\mathbb C}^{r\times 1}$ satisfying  {\footnotesize $[(e^{[i]}_{(i+1)^*})^{\intercal}\otimes C,0,0]\bar \Psi_i=Q_0\Lambda_{i+1}$}.
By Lemma \ref{important}, $\Pi_{22}$ is of full generic rank. Hence, $\Pi$ is of full row generic rank, which means that (\ref{bigmatrix}) is so, too. Thus, replacing $i$ with $i+1$, (\ref{induction}) is of full row generic rank. Inducing from $i=1$ to $i=N$, the proposed statement is proved.

The case that $\bar {\cal G}_{\rm {sys}}$ can be decomposed into more than one vertex-disjoint spanning trees (all rooted at $\cal U$) follows immediately from the former case.
\hfill $\square$

{\bf{\emph{Proof of sufficient part of Theorem \ref{theorem_simo}}}}: By Lemma \ref{Theorem of Linear Parameterization}, Propositions \ref{reachable} and \ref{rank_proposition} assure the sufficiency of conditions in Theorem \ref{theorem_simo} for structural controllability. \hfill $\square$
\begin{spacing}{1.08}
{\bf{\emph{Proof of Theorem \ref{theorem_semi_symmetric}}}}:
 To handle with semi-symmetric constraints, we shall modify the diagonalization of $L_i|_{i=1}^r$.
To this end, first for each undirected edge of ${\cal G}_{\rm sys}$, arbitrarily assign an orientation, whereas each directed edge remains unchanged. Then the incidence matrix $K_I$ is defined as a $|{\cal E}_{\rm sys}|\times |{\cal V}_{\rm sys}|$ matrix (each undirected edge counts one for $|{\cal E}_{\rm sys}|$), where $[K_I]_{ij}=1$ (resp. $[K_I]_{ij}=-1$) if vertex $j$ is the starting vertex (resp. ending vertex) of the $i$th edge. Moreover, $K$ is a $|{\cal V}_{\rm sys}|\times |{\cal E}_{\rm sys}|$ matrix defined as follows{\small
\[{K_{ij}} = \left\{ \begin{array}{l}
1,  {\text{ if}} \quad {[{K_I}]_{ji}} =  - 1\\
-1,  {\text{ if the jth edge is undirected and}} \quad {[{K_I}]_{ji}} =  1\\
0, {\rm otherwise}
\end{array} \right.\]}Afterwards, letting $\Lambda_i$ be the diagonal matrix whose $j$th diagonal entry is the weight of the $j$th edge of ${\cal G}_{\rm sys}$ associated with $L_i$, $j=1,...,|{\cal E}_{\rm sys}|$, one has
$L_i=-K\Lambda_i K_I$. Based on such
diagonalization, the rest of the proof follows similar arguments
to that of Theorem \ref{theorem_simo}. Details are omitted due to their similarities.
\hfill $\square$
\end{spacing}
\vspace*{-1em}
{
\section{Extension with Matrix-Weighted Edges}
This section extends Theorem \ref{theorem_simo} to the case with arbitrarily dimensional matrix-weighted edges. We will give a sufficient condition for structural controllability.
Let us modify subsystem dynamics in (\ref{sub_dynamic})-(\ref{sub_interaction}) into multi-input-multi-output (MIMO)
\begin{equation}\label{matrix_sys1} x_i(t)=Ax_i(t)+Bv_i(t),\end{equation}
\begin{equation}\label{matrix_sys2} v_i(t)=\sum \nolimits_{j=1}^NW_{ij}C(x_j(t)-x_i(t))+\delta_i u_i(t),\end{equation}
where $A\in {\mathbb R}^{n\times n}$, $B\in {\mathbb R}^{n\times p}$ and $C\in {\mathbb R}^{r\times n}$. Matrix $W_{ij}\in {\mathbb R}^{p\times r}$ is a {\emph{matrix-valued weight}} of edge $(j,i)\in {\mathcal E}_{\rm sys}$ with symmetric constraint $W_{ij}=W_{ji}$.  Moreover,  $W_{ij}=0$ if $(j,i)\notin {\mathcal E}_{\rm sys}$.
The lumped representation of the system becomes
\begin{equation}\label{matrix_lump_equation}A_{\rm sys}=I_N\otimes A- (I_N\otimes B)L_m(I_N\otimes C), B_{\rm sys}= \Delta\otimes (B),\end{equation}
where $\Delta$ is defined in the same way as Section \ref{model_description}, and $L_m$ is
{\small$$L_m=\left[
           \begin{array}{cccc}
             \sum \nolimits_{j=1}^N W_{1j} & -W_{12} & \cdots & -W_{1N} \\
             -W_{21} & \sum \nolimits_{j=1}^N W_{2j} & \cdots & -W_{2N} \\
             \vdots & \vdots & \ddots & \vdots \\
             -W_{N1} & -W_{N2} & \cdots & \sum \nolimits_{j=1}^N W_{Nj} \\
           \end{array}
         \right]\in {\mathbb R}^{Np\times Nr},
$$}which could be called a matrix-weighted Laplacian for ${\cal G}_{\rm sys}$. Examples of the above networked system include coupled mass-spring systems and coupled electrical oscillators; see \cite{tuna2016synchronization}.

To given a linear parameterization of (\ref{matrix_lump_equation}),  introduce two matrices $T\doteq I_p\otimes {\bf 1}_{1\times r}$ and $Q\doteq {\bf 1}_{p\times 1}\otimes I_r$. For each $(i,j)\in {\cal E}_{\rm sys}$, let $\Lambda_{ij}$ be a $pq\times pq$ dimensional diagonal matrix whose diagonal entries consist of all entries of $W_{ij}$,  i.e., $\Lambda_{ij}={\bf diag}\{[W_{ij}]_{11},[W_{ij}]_{12},\cdots ,[W_{ij}]_{1r},\cdots,[W_{ij}]_{pr}\}$. Then, we have
$$W_{ij}=T\Lambda_{ij}Q.$$
Moreover, let $K_I$ be the incidence matrix of ${\cal G}_{\rm sys}$, whose definition is given in Section \ref{proofs}, and $K=-K_I^{\intercal}$. Then, it can be validated that,  (\ref{matrix_lump_equation}) has the following linear parameterization,
\begin{equation}\label{matrix_weight}
\begin{array}{l}
A_{\rm sys}=I_N\otimes A-(I_N\otimes B)(K\otimes T){\bf diag}\{\Lambda_{ij}|_{(i,j)\in {\cal E}_{\rm sys}}\} \\
(K_I\otimes Q)(I_N\otimes C), \quad B_{\rm sys}=\Delta \otimes B, \end{array}
\end{equation}where the diagonal entries of ${\bf diag}\{\Lambda_{ij}|_{(i,j)\in {\cal E}_{\rm sys}}\}$ are placed in the order consistent with the incidence matrix $K_I$.

Regarding the linear parameterization (\ref{matrix_weight}), by some algebraic manipulations, the associated TFMs are respectively
\[\begin{split} G^{m}_{zv}(\lambda)\!\!&=\!\!(K_I\otimes Q)(I_N\otimes C)(I_N\otimes(\lambda I-A)^{-1})(I_N\otimes B)(K\otimes T)\\
\!\!&=(K_IK)\otimes(QC(\lambda I-A)^{-1}BP)\\
G^{m}_{zu}(\lambda)\!\!&=(K_I\otimes Q)(I_N\otimes C)(I_N\otimes(\lambda I-A)^{-1})(\Delta \otimes B)\\
&=(K_I\Delta)\otimes(QC(\lambda I-A)^{-1}B)\\
\end{split}\]
\begin{definition}[fixed mode, \cite{anderson1981algebraic}] Given a triple $(A,B,C)\in {\mathbb R}^{n\times n}\times {\mathbb R}^{n\times p}\times {\mathbb R}^{r\times n}$, $(A,B,C)$ is said to have no fixed mode, if $\bigcap \nolimits_{F\in {\mathbb R}^{p\times r}}\sigma(A+BFC)=\emptyset$.
\end{definition}
\begin{proposition}\label{pro_1}  Suppose that for the networked system (\ref{matrix_sys1})-(\ref{matrix_sys2}), $(A,B,C)$ has no fixed mode. If the network topology is globally input-reachable, then every cycle of ${\cal G}_{\rm aux}(G^{m}_{zv}(\lambda),G^{m}_{zu}(\lambda))$ is input-reachable.
\end{proposition}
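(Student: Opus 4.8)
\textbf{Proof proposal for Proposition \ref{pro_1}.}

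The plan is to mimic the proof of Proposition \ref{reachable}, replacing the role played by controllability/observability of the SIMO data $(A,b,C)$ with the weaker hypothesis that the MIMO triple $(A,B,C)$ has no fixed mode, and replacing Lemma \ref{lemma_equal_graph}'s bookkeeping accordingly. First I would use the no-fixed-mode assumption to pin down the sparsity pattern of $[G^m_{zv}(\lambda), G^m_{zu}(\lambda)]$. The key observation is that a triple has no fixed mode if and only if its transfer function matrix $C(\lambda I-A)^{-1}B$ is not identically zero (indeed, a fixed mode forces rank deficiency at that mode for every output feedback, and in the present ``one block'' situation this is equivalent to $C(\lambda I-A)^{-1}B\not\equiv 0$; this is precisely the decoupling-zero characterization in \cite{anderson1981algebraic}). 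Consequently $QC(\lambda I-A)^{-1}BT$ is a nonzero rational matrix, and moreover $QC(\lambda I-A)^{-1}B$ is nonzero; so $[G^m_{zv}(\lambda),G^m_{zu}(\lambda)]$ has the same zero/nonzero pattern — at the block level — as $[(K_IK)\otimes {\bf 1}_{pr\times pr}\,\text{(masked)}, (K_I\Delta)\otimes {\bf 1}_{pr\times 1}\,\text{(masked)}]$. The subtlety here, which is the part I would be most careful about, is that unlike the SIMO case where $c_k(\lambda I-A)^{-1}b\ne 0$ for \emph{every} $k$, here individual entries of $QC(\lambda I-A)^{-1}BT$ may vanish identically; so I cannot claim the full Kronecker-block pattern is dense. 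I would therefore argue at the coarser level of which \emph{vertices} are input-reachable rather than tracking every edge, which suffices because cycle input-reachability only requires \emph{some} vertex of each cycle to be reachable.

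Next I would set up the analogue of Lemma \ref{lemma_equal_graph}. Write $G\doteq K_I\otimes Q$, $H\doteq K\otimes T$, $P\doteq \Delta$, so that $G^m_{zv}$ and $G^m_{zu}$ are obtained from $GH$, $GP$ by the (nonvanishing) rational masks above; the diagonal free-parameter matrix is ${\bf diag}\{\Lambda_{ij}\}$. I would verify Condition 2) of Lemma \ref{lemma_equal_graph} for this $(G,H,P)$: whenever a common index $l$ makes $G_{il}\ne 0$ and $H_{lj}\ne 0$, the corresponding product entry $[GH]_{ij}$ is nonzero. This holds because, as in the computation of $[K_IK]_{ij}$ in the proof of Proposition \ref{reachable}, no cancellation occurs: each entry of $K_I\otimes Q$ is a single nonzero contribution to each entry of the product with $K\otimes T$ (the $Q,T$ factors being $0/1$ matrices with disjoint support patterns that do not induce sign cancellation). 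Applying Lemma \ref{lemma_equal_graph} then transfers cycle input-reachability of ${\cal G}_{\rm aux}(GH,GP)$ (equivalently of ${\cal G}_{\rm aux}(G^m_{zv},G^m_{zu})$, since masking by nonvanishing rational functions preserves the auxiliary digraph) to ${\cal G}_{\rm aux}(H\,{\bf diag}\{\Lambda_{ij}\}\,G, P)$, whose left block has the sparsity pattern of $-L_m$ masked, i.e. — after collapsing the $p\times r$ block structure to the vertex level — the sparsity pattern of the scalar Laplacian $-L$ of ${\cal G}_{\rm sys}$, with input block $\Delta$.

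Finally I would conclude exactly as in Proposition \ref{reachable}: the auxiliary digraph associated with $[-L,\Delta]$, after deleting self-loops, is $\bar{\cal G}_{\rm sys}$, and global input-reachability of $\bar{\cal G}_{\rm sys}$ means every vertex is input-reachable there; zeroing out off-diagonal blocks only removes edges, and one checks (via the cycle structure of $H\,{\bf diag}\{\Lambda_{ij}\}\,G$) that every cycle already passes through an input-reachable vertex. Chasing back through Lemma \ref{lemma_equal_graph} gives that every cycle of ${\cal G}_{\rm aux}(G^m_{zv}(\lambda),G^m_{zu}(\lambda))$ is input-reachable. The main obstacle, as flagged, is handling the possible identical vanishing of individual entries of the MIMO transfer matrix: the remedy is to phrase everything in terms of vertex-level reachability and to lean on the no-fixed-mode hypothesis only through the single fact $C(\lambda I-A)^{-1}B\not\equiv 0$, which is all that is needed to keep the relevant cross-terms alive.
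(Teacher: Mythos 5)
There is a genuine gap, and it sits exactly where you flagged the difficulty. Your pivotal claim that $(A,B,C)$ has no fixed mode if and only if $C(\lambda I-A)^{-1}B\not\equiv 0$ is false: for a single (centralized) feedback block the fixed modes are precisely the uncontrollable and unobservable modes, so the hypothesis is equivalent to $(A,B)$ controllable and $(A,C)$ observable, which is far stronger than the transfer matrix being nonzero. More importantly, the fact you propose to extract from the hypothesis, namely $H(\lambda)\doteq C(\lambda I-A)^{-1}B\not\equiv 0$, is not enough to run the reachability argument. What is actually needed (and what the paper uses) is that \emph{every row} of $H(\lambda)$ is nonzero; this follows from controllability of $(A,B)$ (a consequence of no fixed mode) together with the standing assumption $c_i\ne 0$, exactly as in Proposition \ref{reachable}. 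If some row of $H(\lambda)$ vanished identically, every vertex in the corresponding output layer would lose \emph{all} of its incoming edges, including those coming from the inputs, and the paths you need would be cut; knowing only that some entry of $H$ is nonzero does not let you certify that the vertices which do lie on cycles remain reachable.

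Second, your concluding step, that ``one checks (via the cycle structure\dots) that every cycle already passes through an input-reachable vertex,'' is the substantive content of the proof and is not supplied. The paper's route is: first commute the Kronecker factors by a permutation (Lemma \ref{commutable}) so that the patterns become $(QH(\lambda)T)\otimes(K_IK)$ and $(QH(\lambda))\otimes(K_I\Delta)$; then, using that every row of $H$ is nonzero, pick one nonzero entry $\sigma(i)$ in each row and form the $0/1$ selection matrix $W$, restricting attention to the subgraph it generates; apply Lemma \ref{lemma_equal_graph} to that subgraph with a suitably chosen triple $(G,H,P)$; and finally construct explicit paths to every vertex $(i,j)$ by iterating the map $f(i)=[\sigma(i\ {\rm mod}\ r)-1]p+1$ together with the parent map of a spanning tree, thereby showing every vertex (hence every cycle) is input-reachable. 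Your plan to ``argue at the coarser level of vertices'' points in the right direction, but without the row-nonvanishing fact and the explicit selection-and-path construction the argument does not close. Note also that your factorization $G=K_I\otimes Q$, $H=K\otimes T$ does not fit Lemma \ref{lemma_equal_graph} as stated, since the rational mask $H(\lambda)$ sits between the two factors and cannot simply be absorbed; this is precisely why the paper inserts the permutation and the selection matrix $W$ before invoking that lemma.
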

\begin{proof} See the appendix. 
\end{proof}

\begin{proposition}\label{pro_2} Suppose that for the networked system (\ref{matrix_sys1})-(\ref{matrix_sys2}), $(A,B,C)$ has no fixed mode. If the network topology is globally input-reachable, then ${\rm grank}[\lambda_i I- A_{\rm sys}, B_{\rm sys}]= Nn$ for each $\lambda_i \in \sigma(A)$.
\end{proposition}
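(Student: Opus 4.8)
The plan is to prove Proposition~\ref{pro_2} by the same spanning-tree induction that establishes Proposition~\ref{rank_proposition}, replacing the scalar/vector weights by matrix weights and the hypothesis ``$(A,b)$ controllable and $(A,[c_1^\intercal,\dots,c_r^\intercal]^\intercal)$ observable'' by ``$(A,B,C)$ has no fixed mode''. Observe first that no fixed mode is stronger than each of controllability of $(A,B)$ and observability of $(A,C)$: a left eigenvector of $A$ annihilated by $B$, resp.\ a right eigenvector annihilated by $C$, is an eigenvector of every $A+BFC$, so an uncontrollable/unobservable mode would be fixed. As in Proposition~\ref{rank_proposition}, global input-reachability lets $\bar{\cal G}_{\rm sys}$ be decomposed into vertex-disjoint spanning trees rooted at ${\cal U}$; zeroing the off-tree weights makes $[\lambda_i I-A_{\rm sys},B_{\rm sys}]$ block-diagonal over these trees, so it suffices to treat a single tree ${\cal T}$ rooted at $u_1$ with its vertices in lexicographic order. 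One builds the reduced incidence matrices $K_{Ii}$ of the induced subtrees recursively via~(\ref{Krelation}), puts $K_i=-K_{Ii}^\intercal$ and $S_{i-1}={\bf diag}\{W_{jj^*}|_{j=2}^{i}\}$ (a block-diagonal matrix whose blocks are the free $p\times r$ tree-edge weights --- no diagonalization of the $W$'s is needed for this argument), and writes the ${\cal T}_i$-restricted lumped matrix as in~(\ref{lumpA}) with $b$ replaced by $B$.

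Next I would prove, by induction on $i$ and for each fixed $\lambda_0\in\sigma(A)$, the matrix-weighted analogue of~(\ref{induction}): the pencil $[\,I_i\otimes A-(K_{Ii}^\intercal\otimes B)S_{i-1}(K_{Ii}\otimes C)-\lambda_0 I,\ e_1^{[i]}\otimes B\,]$ has full row generic rank. The base case $i=1$ asks that $[A-\lambda_0 I,\ B]$ have full row rank, which holds since $(A,B)$ is controllable. For the inductive step I would replay the manipulations between~(\ref{Krelation}) and the end of the proof of Proposition~\ref{rank_proposition}: substitute the recursion for $K_{I,i+1}$, take a Schur complement to reach an enlarged pencil of the shape~(\ref{bigmatrix}), and post-multiply by the unipotent block matrix built from the Moore--Penrose inverse $\Psi_i^\dag$ (legitimate since $\Psi_i$ has full row generic rank by the induction hypothesis) to clear the coupling column. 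What remains is to certify that the residual Schur block
$$
\Pi_{22}=\left[\begin{array}{cc} A-\lambda_0 I & -BW_{i+1} \\ -C & I_r-Q_0W_{i+1}\end{array}\right]
$$
has full generic rank $n+r$, where $W_{i+1}\doteq W_{(i+1)(i+1)^*}\in{\mathbb R}^{p\times r}$ is a matrix of fresh free parameters and $Q_0\in{\mathbb C}^{r\times p}$ is a constant matrix produced by $\Psi_i^\dag$.

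This reduces the whole argument to a matrix-weighted analogue of Lemma~\ref{important}: \emph{if $(A,B,C)$ has no fixed mode then, for every constant $Q_0$ and every $\lambda_0\in\sigma(A)$, the block $\Pi_{22}$ above is generically nonsingular.} I would prove this directly, bypassing the $\Gamma$-determinant degree count of Lemma~\ref{important}: for generic $W_{i+1}$ the block $I_r-Q_0W_{i+1}$ is invertible, so a Schur complement shows $\Pi_{22}$ is nonsingular iff $\lambda_0\notin\sigma(A+BFC)$ for $F\doteq -W_{i+1}(I_r-Q_0W_{i+1})^{-1}$ (the sign being irrelevant, as the no-fixed-mode property is symmetric under $F\mapsto-F$). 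This $F$ is a linear-fractional image of $W_{i+1}$ with rational inverse $W_{i+1}=-(I_p-FQ_0)^{-1}F$, so as $W_{i+1}$ ranges over ${\mathbb R}^{p\times r}$ its image is Zariski-dense in ${\mathbb R}^{p\times r}$; and since $(A,B,C)$ has no fixed mode, $\bigcap_{F}\sigma(A+BFC)=\emptyset$ forces $\{F:\lambda_0\notin\sigma(A+BFC)\}$ to be a nonempty --- and Zariski-open --- subset of ${\mathbb R}^{p\times r}$ (cf.\ \cite{anderson1981algebraic}), which therefore meets the dense set of achievable $F$'s. Hence $\Pi_{22}$ is nonsingular for some, and thus generic, $W_{i+1}$. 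Completing the induction from $i=1$ to $i=N$ and reassembling the vertex-disjoint trees (immediate, by block-diagonality) yields ${\rm grank}[\lambda_i I-A_{\rm sys},B_{\rm sys}]=Nn$; only $\lambda_i\in\sigma(I_N\otimes A)=\sigma(A)$ need be checked, matching Condition~2) of Lemma~\ref{Theorem of Linear Parameterization}.

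The step I expect to be the main obstacle is the faithful transcription of the Schur-complement bookkeeping of Proposition~\ref{rank_proposition}'s proof into the MIMO/matrix-weighted setting: with $b$ replaced by a full matrix $B$ and the vector weights by matrix weights, several intermediate blocks in~(\ref{bigmatrix}) change dimension, and one must check that the permutations and Schur complements still go through and leave precisely the clean block $\Pi_{22}$ with an \emph{unstructured} free matrix $W_{i+1}$ and a constant $Q_0$ of the correct shape. Once that is done, the matrix analogue of Lemma~\ref{important} (via the linear-fractional / Zariski-density argument above) and the passage from many disjoint trees to one are comparatively routine.
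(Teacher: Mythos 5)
Your proposal is correct and follows essentially the same route as the paper: reduce to the spanning-tree induction of Proposition \ref{rank_proposition}, and replace Lemma \ref{important} by the claim that the residual block $\Pi_{22}$ is generically nonsingular, which both you and the paper establish through the same linear-fractional correspondence between the free weight $W$ and the feedback $F=-W(I-Q_0W)^{-1}$ together with a Schur complement. The only difference is cosmetic: the paper exhibits a suitable $W$ by a Lebesgue-measure perturbation argument, while you invoke Zariski density of the achievable $F$'s --- the same genericity fact in two dressings.
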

\begin{proof} Observe that $A_{\rm sys}$ in (\ref{matrix_lump_equation}) can be rewritten as \begin{equation} \label{matrixB} A_{\rm sys}=I_N\otimes A-(K_I^{\intercal}\otimes B){\bf diag}\{W_{ij}|_{(i,j)\in {\cal E}_{\rm sys}}\}(K_I\otimes C),\end{equation}
where the diagonal entries of ${\bf diag}\{W_{ij}|_{(i,j)\in {\cal E}_{\rm sys}}\}$ are in the order consistent with $K_I$. Notice that (\ref{matrixB}) has the same form as (\ref{lumpA}). This means that, if we replace the vector $b\in {\mathbb R}^n$ in Lemma \ref{important} with a matrix $B\in {\mathbb R}^{n
\times p}$ and show that the associated implications in that lemma still hold under the proposed condition in Proposition {\ref{pro_2}}, then we could prove Proposition {\ref{pro_2}} in the same line as that of Proposition \ref{rank_proposition}.  For this purpose, we will prove that, for any $\lambda_0\in \sigma(A)$, if $\bigcap \nolimits_{F\in {\mathbb R}^{p\times r}}\sigma(A+BFC)=\emptyset$, then for arbitrary $Q_0\in {\mathbb R}^{r\times p}$, there exists a $F_0\in {\mathbb R}^{p\times r}$, such that matrix
\[M(F_0)=\left[
  \begin{array}{cc}
    A-\lambda_0I & -BF_0 \\
    -C & I-Q_0F_0 \\
  \end{array}
\right]\]has full row rank.
In fact, if $\bigcap \nolimits_{F\in {\mathbb R}^{p\times r}}\sigma(A+BFC)=\emptyset$, there exists $W\in {\mathbb R}^{p\times r}$, such that
$A-\lambda_0I-BWC$ and $I+WQ_0$ are simultaneously invertible. This can be justified by the following analysis. Suppose that a matrix $W_0$ exists such that $A-\lambda_0I-BW_0C$ is invertible. Then, it is an easy manner to see that the set $\Delta_0\doteq \{\Delta W\in {\mathbb R}^{p\times r}: A-\lambda_0I-B(W_0+\Delta W)C \quad {\text {is}} \quad {\text {singluar}}\}$ has zero Lebesgue measure in ${\mathbb R}^{p\times r}$. On the other hand, the set $\Delta_1\doteq \{\Delta W \in  {\mathbb R}^{p\times r}: I+(W_0+\Delta W)Q_0 \quad {\text {is}} \quad {\text {singluar}}\}$ also has zero Lebesgue measure in ${\mathbb R}^{p\times r}$, noting that when $\Delta W=-W_0$, $I+(W_0+\Delta W)Q_0=I$ is invertible. Hence, each element $\Delta W\in {\mathbb R}^{p\times r}\setminus (\Delta_0 \bigcup \Delta_1)$ making $A-\lambda_0I-BWC$ and $I+WQ_0$ simultaneously invertible, with $W=W_0+\Delta W$. Let $F_0=(I+WQ_0)^{-1}W$. It can be validated that, $$A-\lambda_0 I- BF_0(I-Q_0F_0)^{-1}C=A-\lambda_0 I-BWC.$$
That means, $A-\lambda_0 I- BF_0(I-Q_0F_0)^{-1}C$ is invertible, which according to the property of Schur complement, indicates that $M(F_0)$ is invertible. Based on the above, the proposed statement follows similar arguments to the proof of Proposition \ref{rank_proposition}.
\end{proof}

By Lemma \ref{Theorem of Linear Parameterization}, the following theorem follows immediately from Propositions \ref{pro_1}-\ref{pro_2}.
\begin{theorem}\label{theorem_matrix} Consider the networked system (\ref{matrix_sys1})-(\ref{matrix_sys2}) with undirected ${\cal G}_{\rm sys}$.  Suppose that $(A,B,C)$ has no fixed mode. Then, this system is structurally controllable, if and only if the network topology $\bar {\cal G}_{\rm sys}$ is globally input-reachable.
\end{theorem}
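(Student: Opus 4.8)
The plan is to prove the two implications separately, since the sufficiency has already been set up by the two preceding propositions while the necessity is a short structural argument.

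For the \emph{if} direction, I would observe that the linear parameterization (\ref{matrix_weight}) of $(A_{\rm sys},B_{\rm sys})$ is exactly of the form (\ref{Linear Parameterization}) required by Lemma \ref{Theorem of Linear Parameterization}, with $A_0=I_N\otimes A$ (so that $\sigma(A_0)=\sigma(A)$), $B_0=\Delta\otimes B$, the free parameters being the entries of the diagonal blocks $\Lambda_{ij}$, and associated transfer function matrices $G^m_{zv}(\lambda)$ and $G^m_{zu}(\lambda)$ as computed above. Then Condition 1) of Lemma \ref{Theorem of Linear Parameterization}---every cycle of ${\cal G}_{\rm aux}(G^m_{zv}(\lambda),G^m_{zu}(\lambda))$ is input-reachable---is precisely the conclusion of Proposition \ref{pro_1}, and Condition 2)---${\rm grank}[\lambda_0 I-A_{\rm sys},B_{\rm sys}]=Nn$ for each $\lambda_0\in\sigma(A_0)=\sigma(A)$---is precisely Proposition \ref{pro_2}; both are applicable because $(A,B,C)$ has no fixed mode and $\bar{\cal G}_{\rm sys}$ is globally input-reachable. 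Hence $(A_{\rm sys},B_{\rm sys})$ is structurally controllable.

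For the \emph{only if} direction I would argue by contraposition, repeating the necessity argument of Theorem \ref{theorem_simo} with $b$ replaced by $B$. If $\bar{\cal G}_{\rm sys}$ is not globally input-reachable, there are $p\ge 1$ vertices of ${\cal V}_{\rm sys}$ that are not input-reachable; since ${\cal G}_{\rm sys}$ is undirected, none of them is adjacent to an input-reachable vertex (such an edge would make it reachable), and in particular each has $\delta_i=0$. Thus, after reordering the non-input-reachable subsystems first, both $A_{\rm sys}$ and $B_{\rm sys}$ are forced---for every admissible weight assignment---into the block-decoupled form $[A_{\rm sys},B_{\rm sys}]=\left[\begin{array}{ccc}A_{11}&0&0\\0&A_{22}&\Delta_2\otimes B\end{array}\right]$ with $A_{11}\in{\mathbb R}^{np\times np}$. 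The PBH test at any $\lambda_0\in\sigma(A_{11})$ then shows $(A_{\rm sys},B_{\rm sys})$ is uncontrollable for all weights, so the system is not structurally controllable. Combining the two directions yields the theorem.

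I expect essentially no obstacle at the level of Theorem \ref{theorem_matrix} itself---it is bookkeeping on top of Propositions \ref{pro_1}--\ref{pro_2} and Lemma \ref{Theorem of Linear Parameterization}. The substantive difficulty lives inside those propositions. In Proposition \ref{pro_2}, passing from the vector $b$ of Lemma \ref{important} to a matrix $B$ requires showing that ``no fixed mode of $(A,B,C)$'' is exactly the hypothesis needed so that, for every constant $Q_0$, some feedback $F_0$ makes $\left[\begin{array}{cc}A-\lambda_0I & -BF_0\\ -C & I-Q_0F_0\end{array}\right]$ full rank; this is done by first choosing $W$ with $A-\lambda_0I-BWC$ invertible (possible precisely because $\lambda_0$ is not a fixed mode), then perturbing $W$ off the measure-zero set where $I+WQ_0$ is singular, and finally setting $F_0=(I+WQ_0)^{-1}W$ and invoking the Schur complement. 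The other delicate point, inside Proposition \ref{pro_1}, is tracking the generic sparsity pattern of $QC(\lambda I-A)^{-1}BT$ and using Lemma \ref{lemma_equal_graph} to collapse the cycle-reachability condition on ${\cal G}_{\rm aux}(G^m_{zv},G^m_{zu})$ down to the purely topological global input-reachability of $\bar{\cal G}_{\rm sys}$.
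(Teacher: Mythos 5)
Your proposal is correct and takes essentially the same route as the paper: the sufficiency is exactly the paper's one-line deduction of Theorem \ref{theorem_matrix} from Propositions \ref{pro_1}--\ref{pro_2} via Lemma \ref{Theorem of Linear Parameterization}, and your necessity argument is the same block-triangular PBH argument the paper uses for Condition 2) of Theorem \ref{theorem_simo}, which the paper leaves implicit for the matrix-weighted case. No gaps.
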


By characterizations of fixed mode \cite{anderson1981algebraic}, it can be validated that Theorem \ref{theorem_simo} is a special case of Theorem \ref{theorem_matrix}. {\emph{However, unlike Theorem \ref{theorem_simo}, the nonexistence of fixed mode is not necessarily necessary  in the case with matrix-weighted edges.}}}

%
%

\vspace*{-0.6em}
\section{Conclusions}
In this paper, we have proved that, an undirected networked system with {diffusively coupled} identical high-order SIMO subsystems is structurally controllable, if and only if each subsystem is controllable and observable, and the network topology is globally input-reachable. It is also demonstrated that, such conditions are still necessary and sufficient when both directed and undirected edges exist in the network topology. In these results, the underlying graph of the network topology is vector-weighted. {An extension has been further  given when each subsystem is MIMO and the interaction links are matrix-weighted.}

\begin{appendix}
\begin{lemma}[Corollary 52 of \cite{Master_Kronecker}] \label{commutable}
Let $A$ and $B$ be two matrices with dimensions $m\times n$ and $p\times r$ respectively. Let $P(n_1,n_2)$ be an $n_1n_2\times n_1n_2$ permutation matrix depending {\emph{only}} on integers $n_1$ and $n_2$. Then, there exists two permutation matrices $P(m,p)$ and $P(n,r)$, such that
$P(m,p)^{\intercal}(A\otimes B)P(n,r)=B\otimes A$. 
\end{lemma}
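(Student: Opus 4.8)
The plan is to recognise $P(n_1,n_2)$ as the \emph{perfect-shuffle} (commutation) matrix and to prove the identity by constructing these matrices explicitly and then reducing to rank-one factors. First I would fix the convention for $P(n_1,n_2)\in{\mathbb R}^{n_1n_2\times n_1n_2}$ by declaring it to be the permutation matrix whose transpose acts on simple tensors by $P(n_1,n_2)^{\intercal}(x\otimes y)=y\otimes x$ for every $x\in{\mathbb R}^{n_1}$ and $y\in{\mathbb R}^{n_2}$. This is well defined: the vectors $e_i^{[n_1]}\otimes e_j^{[n_2]}$ ($1\le i\le n_1$, $1\le j\le n_2$) form the standard basis of ${\mathbb R}^{n_1n_2}$, and the prescription sends each of them to $e_j^{[n_2]}\otimes e_i^{[n_1]}$, i.e.\ it merely relabels basis vectors by the index swap $(i,j)\mapsto(j,i)$. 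Hence $P(n_1,n_2)$ is a genuine permutation matrix, it depends only on $n_1$ and $n_2$, and the stated action on general $x\otimes y$ follows by bilinearity of $\otimes$. Transposing, one also obtains $P(n_1,n_2)(y\otimes x)=x\otimes y$. This already settles the existence claim of the lemma, with $P(m,p)$ and $P(n,r)$ taken from this explicitly constructed family.

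Next I would reduce the claimed matrix identity to the case of rank-one $A$ and $B$. Both sides are bilinear in the pair $(A,B)$: the right-hand side $B\otimes A$ is bilinear by definition of the Kronecker product, and the left-hand side $P(m,p)^{\intercal}(A\otimes B)P(n,r)$ is bilinear as well, since $A\otimes B$ is bilinear and left/right multiplication by the fixed matrices $P(m,p)^{\intercal}$ and $P(n,r)$ is linear. Writing $A=\sum_{k}x_k u_k^{\intercal}$ and $B=\sum_{l}y_l v_l^{\intercal}$ as sums of rank-one terms with $x_k\in{\mathbb R}^m$, $u_k\in{\mathbb R}^n$, $y_l\in{\mathbb R}^p$, $v_l\in{\mathbb R}^r$, it therefore suffices to verify the identity for $A=xu^{\intercal}$ and $B=yv^{\intercal}$.

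For such rank-one factors the mixed-product rule gives $A\otimes B=(xu^{\intercal})\otimes(yv^{\intercal})=(x\otimes y)(u\otimes v)^{\intercal}$ and likewise $B\otimes A=(y\otimes x)(v\otimes u)^{\intercal}$. Then I would compute directly, using the shuffle action established above together with $(u\otimes v)^{\intercal}P(n,r)=\bigl[P(n,r)^{\intercal}(u\otimes v)\bigr]^{\intercal}$:
\[
P(m,p)^{\intercal}(A\otimes B)P(n,r)
=\bigl[P(m,p)^{\intercal}(x\otimes y)\bigr]\bigl[P(n,r)^{\intercal}(u\otimes v)\bigr]^{\intercal}
=(y\otimes x)(v\otimes u)^{\intercal}=B\otimes A,
\]
which closes the argument. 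The only point requiring care is the \emph{orientation} of the shuffle, so that the transpose on $P(m,p)$ and the non-transposed $P(n,r)$ in the statement land on the correct factors; this is exactly why I pin down the convention through the action on simple tensors in the first step, rather than leaving it implicit. Beyond this index bookkeeping there is no real obstacle, since the statement is the standard commutation-matrix identity $P(m,p)^{\intercal}(A\otimes B)=(B\otimes A)P(n,r)^{\intercal}$. An alternative route, which I would fall back on if a purely explicit verification were preferred, is to compare the $\bigl((i-1)p+s,\,(j-1)r+t\bigr)$ entry of both sides entrywise, or to invoke the vectorisation identity $\mathrm{vec}(MXN^{\intercal})=(N\otimes M)\,\mathrm{vec}(X)$ together with $\mathrm{vec}(X^{\intercal})=P\,\mathrm{vec}(X)$.
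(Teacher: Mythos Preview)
Your argument is correct and is exactly the standard proof of the commutation-matrix identity: fix the perfect-shuffle convention on simple tensors, reduce by bilinearity to rank-one $A$ and $B$, and then use the mixed-product rule. The orientation bookkeeping you flag is the only delicate point, and you handle it properly.

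As for comparison with the paper: there is nothing to compare against, because the paper does not prove this lemma at all. It is quoted verbatim as Corollary~52 of \cite{Master_Kronecker} and used as a black box in the proof of Proposition~\ref{pro_1}. Your write-up therefore supplies strictly more than the paper does on this point.
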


{\bf \emph{{Proof of Proposition \ref{pro_1}}}}: Let $H(\lambda)\doteq C(\lambda I-A)^{-1}B$. Using Lemma \ref{commutable}, we know that there exist two permutation matrices $P(|\mathcal E_{\rm sys}|, pr)$ and $P(N,p)$, such that
$P(|\mathcal E_{\rm sys}|, pr)^{\intercal}G^{m}_{zv}(\lambda)P(|\mathcal E_{\rm sys}|, pr)=(QH(\lambda)P)\otimes(K_IK)$, and
$P(|\mathcal E_{\rm sys}|, pr)^{\intercal}G^{m}_{zu}(\lambda)P(N,p)=(QH(\lambda))\otimes(K_I\Delta)$.

Note that, ${\cal G}_{\rm aux}\big(P(|\mathcal E_{\rm sys}|, pr)^{\intercal}G^{m}_{zv}(\lambda)P(|\mathcal E_{\rm sys}|, pr),P(|\mathcal E_{\rm sys}|, pr)^{\intercal}\\~G^{m}_{zu}(\lambda)P(N,p) \big)$ (denoted by $\bar {\cal G}_{\rm aux}$) can be obtained from ${\cal G}_{\rm aux}\big(G^{m}_{zv}(\lambda),G^{m}_{zu}(\lambda)\big)$ by reordering the associated vertices. Hence, every vertex in
${\cal G}_{\rm aux}(G^{m}_{zv}(\lambda),G^{m}_{zu}(\lambda))$ is input-reachable, if and only if such property holds in $\bar {\cal G}_{\rm aux}$. Since we assume that $(A,B,C)$ has no fixed mode, it is easy to see that this requires that $(A,B)$ is controllable. As $c_i\ne 0$, $i=1,...,r$, we have that $c_i(\lambda I-A)^{-1}B\ne 0$ by noting that $(I,A,B)$ is output controllable. This means that, every row of $H(\lambda)$ is nonzero. Suppose that in the $i$th row of $H(\lambda)$, the $\sigma(i)$th entry is nonzero, $\sigma(i)\in\{1,\cdots,p\}$. Let $W$ be the $r\times p$ matrix with entries being zero or one, where only the $(i,\sigma(i))$th entry is one, $i=1,...,r$. On the other hand, by definitions of matrices $T$ and $Q$, after some simple algebraic manipulations, we have
$QH(\lambda)T={\bf 1}_{p\times 1}\otimes H(\lambda) \otimes {\bf 1}_{1\times r}$, $QH(\lambda)={\bf 1}_{p\times 1}\otimes H(\lambda)$.
Hence, it is easy to see that, if every vertex is input-reachable in ${\cal G}_{\rm aux}((QWT)\otimes(K_IK),(QW)\otimes(K_I\Delta))$, then such property holds in $\bar {\cal G}_{\rm aux}$, as the former is a subgraph of the latter.

Now let $G={\bf diag}\{K_I|_{i=1}^{pr}\}$, $H={\bf 1}_{p\times 1}\otimes W\otimes {\bf 1}_{1\times r}\otimes K$, $P={\bf 1}_{p\times 1}\otimes W\otimes \Delta$. Then, $[(QWT)\otimes(K_IK),(QW)\otimes(K_I\Delta)]$ can be rewritten as $[GH,GP]$. Using Lemma \ref{lemma_equal_graph} on $(G,H,P)$, one obtain the following associated matrix
$$[\underbrace{{\bf 1}_{p\times 1}\otimes W\otimes {\bf 1}_{1\times r}}_{\doteq W_1\in {\mathbb R}^{pr\times pr}}\otimes L,\underbrace{{\bf 1}_{p\times 1}\otimes W}_{\doteq W_2\in {\mathbb R}^{pr\times p}}\otimes \Delta],$$where $L$ is a Laplacian matrix associated with ${\cal G}_{\rm sys}$. 

In the digraph ${\cal G}_{\rm aux}(W_1\otimes L, W_2\otimes \Delta)$, let the vertex corresponding to the $[N(i-1)+j]$th row of $W_1\otimes L$ be denoted by vertex $(i,j)$, $i=1,...,pr$, $j=1,...,N$. Moreover, assume that there is a spanning tree in ${\cal G}_{\rm sys}$ rooted at vertex $1$ with $\delta_1=1$, and denote this tree by $\cal T$ (similar arguments could be made if ${\cal G}_{\rm sys}$ can be decomposed into more than one disjoint spanning trees). Let $Pa(j)$ denote the parent of vertex $j$ in $\cal T$, $1\le j \le N$. Since every row of $W$ is nonzero, such property holds for $W_1$ and $W_2$, as well as $W_1\otimes L$. For each vertex $(i,j)$, $1\le i \le pr$, $1\le j \le N$, according to the structure of $W_1\otimes L$,  it can be seen that the $[\sigma({i \  {\rm mod} \  r})-1]Np+Pa(j)$ -th column of $W_1\otimes L$ is nonzero, where $n_1 \ {\rm  mod} \ n_2$ takes the remainder of $n_1$ divided by $n_2$ (if the remainder is zero, then returns $n_2$). This means that vertex $(i,j)$ has an in-neighbor $\big([\sigma({i \  {\rm mod} \  r})-1]p+1, Pa(j)\big)$. Define $f:{\mathbb N}\rightarrow {\mathbb N}$ as $f(i)=[\sigma({i \  {\rm mod} \  r})-1]p+1$. Note that vertex $(i,1)$ is input-reachable as the $[N(i-1)+1]$th row of $W_2\otimes \Delta$ is nonzero, $i=1,...,pr$. On the other hand, the existence of $\cal T$ in ${\cal G}_{\rm sys}$ means that, $\underbrace{Pa(\cdots(Pa(j))}_{{\text{no more than j-1 $Pa(\cdot)$}}}\cdots)$ reaches $1$. Hence, there is a path from vertex $\big(\underbrace{f(\cdots(f(i))}_{{\text{no more than $j-1$ $f(\cdot)$}}}\cdots),1\big)$ to $(i,j)$ in ${\cal G}_{\rm aux}(W_1\otimes L, W_2\otimes \Delta)$ for any $i\in \{1,...,pr\}$, $j\in\{2,...,N\}$, leading to the input-rechability of $(i,j)$. Based on the above arguments,  this proves the input-reachability of every cycle in ${\cal G}_{\rm aux}(G^{m}_{zv}(\lambda),G^{m}_{zu}(\lambda))$ by Lemma \ref{lemma_equal_graph}, thus proving Proposition \ref{pro_1}.  \hfill $\square$
%
\end{appendix}

\bibliographystyle{elsarticle-num}
{\footnotesize
\bibliography{yuanz3}

\begin{thebibliography}{10}
\expandafter\ifx\csname url\endcsname\relax
  \def\url#1{\texttt{#1}}\fi
\expandafter\ifx\csname urlprefix\endcsname\relax\def\urlprefix{URL }\fi
\expandafter\ifx\csname href\endcsname\relax
  \def\href#1#2{#2} \def\path#1{#1}\fi

\bibitem{LucaSynchronization}
L.~Scardovi, R.~Sepulchre, Synchronization in networks of identical linear
  systems, Automatica 45~(11) (2009) 2557--2562.

\bibitem{zhang2014upper}
S.~Zhang, M.~Cao, M.~K. Camlibel, Upper and lower bounds for controllable
  subspaces of networks of diffusively coupled agents, IEEE Transactions on
  Automatic Control 59~(3) (2014) 745--750.

\bibitem{mesbahi2010graph}
M.~Mesbahi, M.~Egerstedt, Graph theoretic methods in multiagent networks,
  Vol.~33, Princeton University Press, 2010.

\bibitem{Modern_Control_Ogata}
K.~Ogata, Y.~Yang, Modern Control Engineering, Vol.~4, Prentice-Hall, 2002.

\bibitem{R.Am2009Controllability}
R.~Amirreza, M.~Ji, M.~Mesbahi, et~al, Controllability of multi-agent systems
  from a graph-theoretic perspective, SIAM Journal on Control and Optimization
  48~(1) (2009) 162--186.

\bibitem{H.G.2004On}
H.~G. Tanner, On the controllability of nearest neighbor interconnections, in:
  43rd IEEE Conference on Decision and Control, IEEE, 2004, pp. 2467--2472.

\bibitem{aguilar2015graph}
C.~O. Aguilar, B.~Gharesifard, Graph controllability classes for the Laplacian
  leader-follower dynamics, IEEE Transactions on Automatic Control 60~(6)
  (2015) 1611--1623.

\bibitem{L.Wa2016Controllability}
L.~Wang, G.~R. Chen, X.~F. Wang, W.~K.~S. Tang, Controllability of networked
  MIMO systems, Automatica 48 (2016) 405--409.

\bibitem{Y_Zhang_2016}
Y.~Zhang, T.~Zhou, Controllability analysis for a networked dynamic system with
  autonomous subsystems, IEEE Transactions on Automatic Control 48~(7) (2017)
  3408--3415.

\bibitem{trumpf2018controllability}
J.~Trumpf, H.~L. Trentelman, Controllability and stabilizability of networks of
  linear systems, IEEE Transactions on Automatic Control 64~(8) (2019)
  3391--3398.

\bibitem{xue2018modal}
M.~Xue, S.~Roy, Modal barriers to controllability in networks with
  linearly-coupled homogeneous subsystems, IFAC-PapersOnLine 51~(23) (2018)
  130--135.

\bibitem{zhang2019structural}
Y.~Zhang, T.~Zhou, Structural controllability of an NDS with lft parameterized
  subsystems, IEEE Transactions on Automatic Control 64~(12) (2019) 4920--4935.

\bibitem{Commault2019Generic}
C.~Commault, A.~Kibangou, Generic controllability of networks with identical
  SISO dynamical nodes, IEEE Transactions on Control of Network Systems (2019)
  doi:10.1109/TCNS.2019.2950587.

\bibitem{tuna2016synchronization}
S.~E. Tuna, Synchronization under matrix-weighted Laplacian, Automatica 73
  (2016) 76--81.

\bibitem{tuna2017observability}
S.~E. Tuna, Observability through a matrix-weighted graph, IEEE Transactions on
  Automatic Control 63~(7) (2017) 2061--2074.

\bibitem{zhang2019Structural_arxiv}
Y.~Zhang, Y.~Xia, D.~Zhai, Structural controllability of networked relative
  coupling systems under fixed and switching topologies, arXiv preprint
  arXiv:1911.06450, 2019.

\bibitem{mousavi2017structural}
S.~S. Mousavi, M.~Haeri, M.~Mesbahi, On the structural and strong structural
  controllability of undirected networks, IEEE Transactions on Automatic
  Control 63~(7) (2017) 2234--2241.

\bibitem{menara2018structural}
T.~Menara, D.~S. Bassett, F.~Pasqualetti, Structural controllability of
  symmetric networks, IEEE Transactions on Automatic Control 64~(9) (2019)
  3740--3747.

\bibitem{li2018structural}
J.~Li, X.~Chen, S.~Pequito, G.~J. Pappas, V.~M. Preciado, Structural target
  controllability of undirected networks, in: 2018 IEEE Conference on Decision
  and Control (CDC), IEEE, 2018, pp. 6656--6661.

\bibitem{jia2019strong}
J.~Jia, H.~L. Trentelman, W.~Baar, M.~K. Camlibel, Strong structural
  controllability of systems on colored graphs, IEEE Transactions on Automatic
  Control (2019) doi:10.1109/TAC.2019.2948425.

\bibitem{goldin2013weight}
D.~Goldin, J.~Raisch, On the weight controllability of consensus algorithms,
  in: 2013 European Control Conference, IEEE, 2013, pp. 233--238.

\bibitem{Morse_1976}
J.~Corfmat, A.~S. Morse, Structurally controllable and structurally canonical
  systems, IEEE Transactions on Automatic Control 21~(1) (1976) 129--131.

\bibitem{R.A.1991Topics}
R.~A. Horn, C.~R. Johnson, Topics in Matrix Analysis, Cambridge University
  Press, 1991.

\bibitem{anderson1981algebraic}
B.~D. Anderson, D.~J. Clements, Algebraic characterization of fixed modes in
  decentralized control, Automatica 17~(5) (1981) 703--712.

\bibitem{Master_Kronecker}
B.~J. Broxson, The kronecker product, Ph.D. thesis, University of North
  Florida£¬ College of Arts and Sciences (2006).

\end{thebibliography}
}

\end{document}